\newcommand{\ox}{\otimes}
\newcommand{\yfdet}{\textstyle\det_{d,n-d}^{\wedge 2}}
\newcommand{\rank}{\text{rank}}
\newcommand{\crank}{\text{krank}}
\newcommand{\wdots}{\wedge\ldots\wedge}
\newcommand{\intprod}{\reflectbox{\rotatebox[origin=c]{180}{$\neg$}}}
\DeclareMathOperator{\perm}{\mathrm{perm}}
\DeclareMathOperator{\im}{Im}
\newtheorem{theorem}{Theorem}[section]
\newtheorem{prop}[theorem]{Proposition}
\newtheorem{example}[theorem]{Example}
\newtheorem{lemma}[theorem]{Lemma}
\newtheorem{corollary}[theorem]{Corollary}
\theoremstyle{definition}
\newtheorem{definition}[theorem]{Definition}
\theoremstyle{remark}
\newtheorem{remark}[theorem]{Remark}
\title[Koszul-Young Flattenings and the Determinant]{Koszul-Young Flattenings and Symmetric Border Rank of the Determinant}
\author{Cameron Farnsworth}
\begin{document}

\begin{abstract}
 We present new lower bounds for the symmetric border rank of the $n\times n$ determinant for all $n$.  Further lower bounds are given for the $3\times 3$ permanent.
\end{abstract}

\maketitle

\section{Introduction}

The determinant polynomial is ubiquitous, its properties have been extensively studied. However basic questions regarding its complexity are still 
not understood.  Lower bounds for the (symmetric) border rank of a polynomial 
provide a measurement of its complexity and, as such, has become an area of growing 
interest.  In this paper we use techniques developed in \cite{LandOtt} to explore this question.  We prove a new lower bound for the 
symmetric border rank of the $n\times n$ determinant.

\begin{definition}
Let $V$ be a vector space and let $S^dV$ denote homogeneous degree $d$ polynomials on $V^\ast$. Given $P\in S^dV$, define its symmetric rank $R_s(P)$ by
\[
R_s(P)=\min\left\{r\in\mathbb{N}:P=\sum_{i=1}^r(v_i)^d,v_i\in V\right\}.
\]
\end{definition}
Symmetric rank is not semi-continuous under taking limits or Zariski closure, so we introduce symmetric border rank.
\begin{definition} \label{brankdef}
Let $P\in S^dV$.  Define the symmetric border rank of $P$, $\underline{R}_s(P)$ 
to be
\[
\underline{R}_s(P)=\min\left\{r\in\mathbb{N}:P\in\overline{\{T:R_s(T)=r\}}\right\}
\]where the overline denotes Zariski closure.
\end{definition}

\begin{theorem}

For $n\geq 5$, the following are lower bounds on the symmetric border rank of 
the determinant, $\underline{R}_s(\textstyle\det_n)$.

For $n$ even:
\[
\underline{R}_s(\textstyle\det_n)\geq\left(1+\frac{8(-8+6n^2+n^3)}{
(-1+n)(2+n)(4+n)^2(-2+n^2)}\right)\displaystyle\binom{n}{\frac{n}{2}}^2.
\]

For $n$ odd:
\[
\underline{R}_s(\textstyle\det_n)\geq\left(1+\frac{16(9+8n+n^2)}{
(3+n)(5+n)^2(-2+n^2)}\right)\displaystyle\binom{n}{\frac{n-1}{2}}^2.
\]
\end{theorem}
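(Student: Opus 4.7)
The plan is to apply the Koszul-Young flattening machinery of Landsberg-Ottaviani \cite{LandOtt}. Recall the basic mechanism: for any $GL(V)$-equivariant linear map $F_{(\cdot)} \colon A \to B$ that depends linearly on $P \in S^n V$ and satisfies $\rank(F_{\ell^n}) \leq r_1$ for every rank-one $\ell^n$, semicontinuity and subadditivity of rank yield
\[
\underline{R}_s(P) \;\geq\; \frac{\rank(F_P)}{r_1}.
\]
The goal is to produce a flattening $\yfdet$ tuned to the determinant whose value of $\rank(F_{\det_n})/r_1$ exceeds the classical Young-flattening bound of $\binom{n}{\lfloor n/2\rfloor}^2$.

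First I would define $\yfdet$ explicitly for general $P \in S^n V$ with $V = E \ox F$ and $\dim E = \dim F = n$. The notation $\det_{d,n-d}^{\wedge 2}$ indicates a Young flattening indexed by the split $(d, n-d)$ of $n$, composed with a Koszul enhancement (the superscript $\wedge 2$) that augments the source and target by an extra pair of wedge factors. Concretely, the source and target will be sums of Schur modules of $E$, $F$, and $V$ built from the partition $(d)$ (and its complement), and the map factors through multiplication by $P$ followed by an antisymmetrization. I would take $d = \lfloor n/2 \rfloor$ in order to maximize $\binom{n}{d}$.

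Second, I would compute $\rank(\yfdet(\det_n))$. Because $\det_n$ is $SL(E) \times SL(F)$-invariant, the flattening is an intertwiner between $GL(E) \times GL(F)$-representations, so by Schur's lemma each isotypic block is either zero or an isomorphism. Identifying the nonzero blocks via Pieri and Littlewood-Richardson rules and summing their Weyl dimensions should produce the numerator, whose leading contribution $\binom{n}{d}^2$ comes from the pure Young piece, with the correction term supplying the extra factor $1 + \tfrac{\cdots}{\cdots}$ in the statement.

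The main obstacle will be the rank-one denominator $r_1 = \rank(\yfdet(\ell^n))$. Here the $GL(V)$-equivariance used above no longer simplifies things since only the stabilizer of $\bbc\ell$ acts on $\ell^n$. Instead one argues that the image of $F_{\ell^n}$ is constrained to lie in a specific subrepresentation, essentially the image of multiplication by $\ell$ (and wedging with $\ell$) on the relevant Schur modules, whose dimension is a tractable polynomial in $n$. Computing this dimension via hook-length/Weyl dimension formulas and dividing into the numerator produces the claimed bound; the two parities of $n$ are handled separately because the Schur modules and Pieri branching differ when $d = n/2$ versus $d = (n-1)/2$, leading to slightly different rational expressions and explaining the two cases in the theorem.
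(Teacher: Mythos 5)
Your overall strategy is the same as the paper's: the map $\yfdet$ is exactly the Koszul--Young flattening
\[
\textstyle\bigwedge^{n-d}A\ox\bigwedge^{n-d}B\ox\bigwedge^2(A\ox B)\longrightarrow\textstyle\bigwedge^{n-d-1}A\ox\bigwedge^{n-d-1}B\ox\bigwedge^3(A\ox B)
\]
with $d=\lfloor n/2\rfloor$, the numerator is the dimension of its image computed module-by-module, and the denominator is $\rank$ of the same flattening on $(X^i_j)^n$, which the paper computes to be $\binom{n^2-1}{2}$ by observing that the image on a rank-one power is forced into $(X^i_j)^{n-d-1}\ox X^i_j\wedge v\wedge w$ --- precisely your ``wedging with $\ell$'' constraint.

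The genuine gap is in your second step. Pieri/Littlewood--Richardson combinatorics plus Schur's lemma only tell you which isotypic components \emph{can} appear in the image, namely those occurring in both the source and the target; they give an upper bound on $\rank(\yfdet(\det_n))$, whereas the theorem needs a lower bound. For each of the nine candidate modules $S_\pi A\ox S_\mu B$ one must still decide which horn of the zero-or-isomorphism dichotomy holds, and this cannot be read off from the representation theory: it requires exhibiting a highest weight vector of the right weight, pushing it through $\yfdet$, and checking that some specific term (e.g.\ $X^1_1\wedge X^1_2\wedge X^1_3\ox\Delta^{[n-d]\smallsetminus\{1\}}_{[n-d+2]\smallsetminus\{1,2,3\}}$) survives the cancellation in the resulting signed sum. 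This occupies five separate lemmas in the paper and is the technical core of the argument; your proposal treats it as automatic. Two smaller omissions: choosing $d=\lfloor n/2\rfloor$ ``to maximize $\binom{n}{d}$'' is not sufficient justification, since the image dimension is $f(n,d)\binom{n}{d}^2$ with $f$ depending nontrivially on $d$, and one must verify the product is maximized at $d=\lfloor n/2\rfloor$ (the paper's Lemma \ref{lemmaImDim}); and one should note that the multiplicity-one property of each candidate module is what makes a single highest-weight-vector evaluation conclusive.
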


\begin{remark}
 Previously known lower bounds were
 \[
 \underline{R}_s(\textstyle\det_n)\geq\binom{n}{\frac{n}{2}}^2
 \]
 for $n$ even, and
 \[
 \underline{R}_s(\textstyle\det_n)\geq\binom{n}{\frac{n-1}{2}}^2
 \]
 for $n$ odd.
\end{remark}

\begin{remark}
 Asymptotically, our bound is
 \[
 \underline{R}_s(\textstyle\det_n)\gtrsim \frac{2^{2n+1}}{\pi\cdot n}+\frac{2^{2n+1}}{\pi\cdot n^4}
 \]
 whereas the previous lower bounds are approximately $\underline{R}_s(\textstyle\det_n)\gtrsim \frac{2^{2n+1}}{\pi\cdot n}$.
\end{remark}

\begin{theorem}
 $\underline{R}_s(\textstyle\det_4)\geq 38$.
\end{theorem}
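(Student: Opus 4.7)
The plan is to apply the same Koszul-Young flattening construction used to establish Theorem 1.3 but with an explicit rank computation at $n=4$. Recall that for any $P \in S^dV$, a Koszul-Young flattening produces a linear map $\phi_P \colon A \to B$ between explicit Schur functor constructions, and one has the general bound
\[
\underline{R}_s(P) \;\geq\; \frac{\rank\,\phi_P}{\rank\,\phi_{v^d}}
\]
for $v \in V$ generic. The right-hand side depends only on $P$ and on the parameters of the flattening, and for $P = \det_n$ the general estimate is what produces the formula in Theorem 1.3.

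The first step is to instantiate this machinery at $n = 4$: fix the flattening parameters that are used in the proof of Theorem 1.3 (namely the pairing coming from $\Lambda^2 V \otimes \Lambda^2 V$ together with a Young symmetrizer that refines the plain Koszul flattening), and write out $\phi_{\det_4}$ as an explicit matrix with respect to a basis indexed by pairs of $2$-element subsets of $\{1,2,3,4\}$. The target space carries a decomposition into $GL_4 \times GL_4$-isotypic components, and the additional Young projector isolates the irreducible piece exploited in the main theorem.

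The second step is the actual rank calculation. Plugging $n=4$ into the general formula of Theorem 1.3 gives
\[
\left(1 + \frac{8(-8 + 6\cdot 16 + 64)}{3 \cdot 6 \cdot 64 \cdot 14}\right)\binom{4}{2}^2 \;=\; \left(1 + \frac{1216}{16128}\right)\cdot 36 \;\approx\; 38.71,
\]
so provided the construction carries over to $n=4$ (which the statement of Theorem 1.3 excludes for some technical reason that must be identified), the integer lower bound $\underline{R}_s(\det_4) \geq 38$ will follow. One then computes $\rank\,\phi_{\det_4}$ and $\rank\,\phi_{v^4}$ directly, either by hand using the block structure inherited from the isotypic decomposition or by a computer algebra verification on the finite-dimensional matrices involved.

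The main obstacle is the exact rank computation for $\phi_{\det_4}$: the general argument of Theorem 1.3 only yields a lower estimate on $\rank\,\phi_{\det_n}$ that is asymptotically tight but not exact, whereas obtaining $38$ at $n=4$ requires pinning down the rank to within a small integer. I would attack this either by exhibiting an explicit subset of $38\cdot \rank\,\phi_{v^4}$ linearly independent columns in $\phi_{\det_4}$, or by reducing the matrix modulo a prime and verifying the rank on a computer. The secondary obstacle is to confirm that the parameter choice valid for $n\geq 5$ does not degenerate at $n=4$; if it does, the bound must be recovered from a slightly modified Young projector adapted to the case $d = 4$.
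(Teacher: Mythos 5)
Your plan hinges on the assumption that the $\bigwedge^2$ Koszul--Young flattening $\yfdet$ of the main theorem ``carries over'' to $n=4$, while explicitly deferring the identification of why the paper restricts that theorem to $n\geq 5$. That deferral is the gap, and it is not a minor one: the paper does \emph{not} prove $\underline{R}_s(\det_4)\geq 38$ with the $\bigwedge^2$ flattening. It proves it with the simpler flattening $\det_{d,n-d}^{\wedge 1}$ of Section 3, whose associated bound (the preliminary theorem closing that section) is valid for all $n\geq 3$ and at $n=4$, $d=2$ gives
\[
\left(1+\frac{4}{(4-1)(4+2)^2}\right)\binom{4}{2}^2=\frac{28}{27}\cdot 36=\frac{112}{3}\approx 37.33,
\]
hence $\underline{R}_s(\det_4)\geq 38$. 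A strong internal signal that your route does not work as stated is your own arithmetic: substituting $n=4$ into the main formula yields $\approx 38.71$, which would force the bound $\geq 39$, yet the paper claims only $38$. The obstruction you would need to confront is the one recorded in the remark following Lemma \ref{lemmaImDim}: the dimension count $f(n,d)\binom{n}{d}^2$ is derived assuming every irreducible module listed in Lemma \ref{lemmaDecomp} actually occurs, which requires the lengths of the relevant partitions to stay within $\dim A=\dim B=n$; for $n=4$ this bookkeeping fails and the formula cannot simply be evaluated at $n=4$. So you must either switch to the $\wedge^1$ flattening (as the paper does), or carry out a genuinely new rank computation of $\yfdet$ at $n=4$, $d=2$ rather than reading it off the $n\geq 5$ formula.

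Two further inaccuracies. First, your claim that the argument behind the main theorem ``only yields a lower estimate on $\rank\,\phi_{\det_n}$ that is asymptotically tight but not exact'' misreads the method: the paper computes the rank of $\yfdet(\det_n)$ exactly, by combining Schur's lemma (the image lies in the multiplicity-one isotypic components common to source and target) with explicit highest weight vectors on which the map is nonzero, and then summing the dimensions of those irreducibles. Second, the denominator $\rank\,\phi_{v^d}$ is not a loose end to be ``computed directly'' at the end; it is pinned down representation-theoretically ($\binom{n^2-1}{2}$ for the $\wedge^2$ flattening, per Remark \ref{remVerRank}, and $n^2-1$ for the $\wedge^1$ flattening), and dividing by the correct value is exactly what turns the image dimension into the border rank bound via Proposition \ref{LOprop}.
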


\begin{remark}
 The previous bound was $\underline{R}_s(\textstyle\det_4)\geq 36$.
\end{remark}

\noindent Using a Macaulay2 \cite{M2} package developed by Steven Sam \cite{Sam}, we also show

\begin{theorem}
	\[
	\underline{R}_s(\textstyle\det_3)\geq 14
	\]
and
	\[
	\underline{R}_s(\textstyle\perm_3)\geq 14.
	\]
\end{theorem}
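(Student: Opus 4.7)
The plan is to realize both lower bounds as instances of the general Koszul-Young flattening inequality
\[
\underline{R}_s(P)\ \geq\ \frac{\rank \Phi(P)}{\rank \Phi(\ell^d)},
\]
valid for any $GL(V)$-equivariant linear map $\Phi:S^dV\to \mathrm{Hom}(A,B)$ built from Schur functors, where $\ell\in V$ is any nonzero linear form. Since the bound is semicontinuous, its validity on the rank-one locus $\{\ell^d\}$ propagates to the entire border rank variety. Both $\det_3$ and $\perm_3$ sit inside $S^3\bbc^9$, so I would work with Koszul-Young flattenings built from $V=\bbc^9$.

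First I would select the Koszul-Young flattening $\Phi=\Phi_{d,p,k}$ that best fits the $d=3$, $\dim V=9$ regime. Concretely, one fixes a partition $\mu$ and considers the Schur-module map
\[
\Phi_\mu(P):\ \mathbb{S}_\mu V\ \longrightarrow\ \mathbb{S}_{\mu'} V
\]
associated with multiplication by $P$, composed with skew-symmetrization in the appropriate tensor factors (the Koszul part) and symmetrization in the others (the Young part). For the single-power generator $\ell^3$ the map $\Phi_\mu(\ell^3)$ has a transparent decomposition because wedge-ing against $\ell$ kills a large subspace, so $\rank \Phi_\mu(\ell^3)$ is given by an explicit binomial expression. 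Dividing gives the ceiling bound $14$ only if $\rank\Phi_\mu(\det_3)\geq 14\cdot\rank\Phi_\mu(\ell^3)$ and analogously for $\perm_3$.

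The second step is the actual rank computation, which is not tractable by hand: $\Phi_\mu(P)$ is a very large matrix with coefficients polynomial in the $9$ variables evaluated at $\det_3$ or $\perm_3$, with entries indexed by pairs of semistandard tableaux. Here I would invoke Steven Sam's \texttt{PieriMaps} / Schur-functor package in Macaulay2, which produces the equivariant map $\Phi_\mu$ in coordinates, and then call \texttt{rank} on the resulting matrix specialized at $\det_3$ and at $\perm_3$. One also computes $\rank \Phi_\mu(\ell^3)$ either symbolically or by plugging in $\ell=x_{1,1}$. The ratios should both come out to be at least $14$, yielding the claim.

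The main obstacle is the combinatorial search for the right $\mu$: generic Koszul-Young flattenings (the ones used in Theorem 1.1) are optimized for large $n$ and only give a bound below $14$ in the $n=3$ case, so one must scan the finite list of partitions $\mu$ of small size with $\ell(\mu)\leq 9$ for which the inequality $\rank\Phi_\mu(P)/\rank\Phi_\mu(\ell^3)\geq 14$ holds. A secondary difficulty is that $\perm_3$ has only a torus symmetry (no $GL_3\times GL_3$), so the same $\mu$ need not work for both polynomials; one may need to compute with two different flattenings, although the bound $14$ being the same in both cases suggests a common $\mu$ works. Once the correct $\mu$ is identified, the Macaulay2 computation completes the proof.
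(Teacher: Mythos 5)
Your framework is exactly the paper's: the bound comes from Proposition \ref{LOprop} applied to an equivariant Pieri-type flattening $S_\mu\bbc^9\to S_{\mu'}\bbc^9$, with the ranks at $\det_3$, $\perm_3$, and $\ell^3$ computed by Steven Sam's \texttt{PieriMaps} package in Macaulay2. However, what you have written is a search plan rather than a proof: for a theorem whose entire content is ``here is a flattening whose rank ratio exceeds $13$,'' the choice of $\mu$ and the resulting numbers \emph{are} the proof, and you explicitly leave both open (``the main obstacle is the combinatorial search for the right $\mu$''). The paper's choice is $\pi_3=(2^4,1^4)$ with target $\tilde{\pi}_3=(3,2^4,1^4)$, i.e.\ the added horizontal strip of three boxes sits entirely in a new first row. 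The design principle you are missing is that this forces $\mathcal{F}_{\pi_3,\tilde{\pi}_3}(\ell^3)$ to vanish on every tableau containing the variable $\ell$, so its rank is $\dim S_{\pi_3}\bbc^8=70$ while $\dim S_{\pi_3}\bbc^9=1050$; the ratio $1050/70=15$ caps what the method can give and is what makes $14$ reachable at all. The computed ranks are $950$ for $\det_3$ and $934$ for $\perm_3$, giving $\lceil 950/70\rceil=\lceil 934/70\rceil=14$. Your worry that $\perm_3$ might require a different $\mu$ is reasonable a priori but moot: the same flattening handles both, since the method only needs the rank of the specialized matrix and uses no symmetry of the polynomial beyond that.

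One smaller imprecision: this flattening is a pure Young (Pieri) flattening between Schur modules of near-rectangular shape, not a Koszul flattening with a $\bigwedge^p$ factor; as you yourself note, the Koszul flattening $\det_{1,2}^{\wedge 2}$ from the main theorem only yields $12$ here, so restricting your scan to Koszul-type maps would fail to reach $14$.
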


\begin{remark}
 The previous bounds were
	\[
	\underline{R}_s(\textstyle\det_3)\geq 9
	\]
and
	\[
	\underline{R}_s(\textstyle\perm_3)\geq 9.
	\]
\end{remark}

\begin{definition}
 Let $P\in S^dV$.  We define the Chow rank of $P$, $\rank_{Chow}(P)$, as
 \[
  \rank_{Chow}(P)=\min\{k : P=\sum_{i=1}^k\ell_{i1}\ldots\ell_{id} \ | \ \ell_{ij}\in V\}.
 \]
\end{definition}

In \cite{TeitlerIlten} it is shown that $\rank_{Chow}(\perm_3)=4$. Prior to this it was known that $\rank_{Chow}(\perm_3)\leq 4$ \cite{RyserFormula,GlynnPermSquare}.  Given $\rank_{Chow}(\perm_3)=4$, results from \cite{CarlCatGera} and \cite{BuczBuczTeit} proving $\underline{R}_s(x_1\cdots x_d)\leq 2^{d-1}$ show $\underline{R}_s(\perm_3)\leq 16$. In summary:

\begin{corollary}
 $14\leq\underline{R}_s(\perm_3)\leq 16$.
\end{corollary}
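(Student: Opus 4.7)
The plan is to combine the two bounds; the corollary is essentially a summary and little new work is needed. The lower bound $14 \le \underline{R}_s(\perm_3)$ is simply a restatement of the theorem proved earlier in the paper using Sam's Macaulay2 package, so I would just invoke that result.

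For the upper bound $\underline{R}_s(\perm_3) \le 16$, I would argue as follows. By the Teitler--Ilten computation cited above, $\rank_{Chow}(\perm_3)=4$, so there exist linear forms $\ell_{ij}\in V$ with
\[
\perm_3 \;=\; \sum_{i=1}^{4} \ell_{i1}\ell_{i2}\ell_{i3}.
\]
Each summand is a product of three linear forms, and the Carlini--Catalisano--Geramita and Buczy\'nski--Buczy\'nski--Teitler results give $\underline{R}_s(x_1 x_2 x_3) \le 2^{3-1}=4$. Since border rank does not increase under linear substitution, the same bound applies to every product $\ell_{i1}\ell_{i2}\ell_{i3}$ of three linear forms.

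Finally, I would apply subadditivity of symmetric border rank under sums: if $P_i \in \{T : R_s(T) \le r_i\}^{-}$ for $i=1,\dots,k$, then $\sum_i P_i \in \{T : R_s(T) \le \sum_i r_i\}^{-}$, since each $P_i$ is a limit of rank-$r_i$ tensors and the sum of the approximating sequences approximates $\sum_i P_i$. Applying this with $k=4$ and each $r_i = 4$ yields $\underline{R}_s(\perm_3)\le 16$, closing out the corollary.

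The only conceptual point to be careful about is whether the bound $\underline{R}_s(x_1\cdots x_d)\le 2^{d-1}$ transfers to an arbitrary product of linear forms; this is immediate since linearly independent linear forms lie in the $GL(V)$-orbit of the standard monomial (and border rank is $GL$-invariant), and the degenerate case of linearly dependent forms only decreases border rank. No step here is expected to be a genuine obstacle — the real work has been done elsewhere in the paper and in the cited literature.
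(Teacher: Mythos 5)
Your proposal is correct and follows exactly the paper's route: the lower bound is the Macaulay2/PieriMaps computation quoted from the earlier theorem, and the upper bound combines $\rank_{Chow}(\perm_3)\leq 4$ with $\underline{R}_s(x_1x_2x_3)\leq 4$ and subadditivity of border rank. The details you supply (GL-invariance to reduce to the standard monomial, semicontinuity for degenerate products, subadditivity via sums of approximating families) are all standard and correctly handled; the paper leaves them implicit.
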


We may compare these lower bounds with known bounds on other ranks. $R_s(\det_n)\geq 
\binom{n}{\lfloor\frac{n}{2}\rfloor}^2+n^2-(\lfloor\frac{n}{2}\rfloor + 1)^2$ 
shown in \cite{LandTeit}, and for cactus rank, $\crank(\det_n)\geq\binom{2n}{n}-\binom{2n-2}{n-1}$ 
shown in \cite{DerkTeit} and before this it was known that 
$\crank(\det_n)\geq\frac{1}{2}\binom{2n}{n}$ \cite{Shafiei}. Known upper bounds for symmetric rank of $\det_n$ are $R_s(\det_n)\leq\left(\frac{5}{6}\right)^{\lfloor n/3\rfloor}2^{n-1}n!$ \cite{DerkNNorm} which also serves as an upper bound for symmetric border rank since $\underline{R}_s(T)\leq R_s(T)$ for any symmetric tensor.\\

\section{Background}

\indent Throughout this paper Young flattenings, a tool developed and used by Landsberg and Ottaviani \cite{LandOtt}, will be used extensively.  The irreducible polynomial representations of the general linear group, $GL(V)$, are parametrized by partitions $\pi$, where $\pi$ 
has at most $\dim V$ parts, see, e.g. \cite{FultYT,FultHarRep}. It is helpful 
to record these partitions visually by Young diagrams, which are 
left aligned diagrams consisting of boxes such that the $i$th row of the diagram has $\pi_i$ 
many boxes.
\begin{example}
The Young diagram corresponding to the partition $(4,3,1)$ of $8$ is
\[
\yng(4,3,1)
\]
\end{example}
 
\begin{prop}
[Pieri Formula, see e.g. \cite{FultHarRep,LandTensor}] Let $S_{\pi}V$ be an irreducible 
representation of $GL(V)$.  Then as a $GL(V)$-module
\[
S_{\pi}V\otimes S_{(d)}V=\bigoplus_{\substack{\mu \\ \ell(\mu)\leq\dim 
V}}S_{\mu}V
\]
where the partitions $\mu$ are obtained by adding $d$ boxes to $\pi$ so that no 
two boxes are added to the same column.
\end{prop}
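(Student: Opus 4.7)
The plan is to reduce the claimed $GL(V)$-module decomposition to an identity of characters, and then prove that identity combinatorially. Since any finite-dimensional polynomial representation of $GL(V)$ is determined up to isomorphism by its character, and the character of $S_\lambda V$ is the Schur polynomial $s_\lambda(x_1,\ldots,x_{\dim V})$, the statement
\[
S_\pi V \otimes S_{(d)}V \;\cong\; \bigoplus_\mu S_\mu V
\]
is equivalent to the symmetric-function identity
\[
s_\pi \cdot h_d \;=\; \sum_\mu s_\mu,
\]
where $h_d = s_{(d)}$ is the complete homogeneous symmetric polynomial of degree $d$, the sum ranges over $\mu$ whose diagrams are obtained from $\pi$ by adding a horizontal strip of $d$ boxes, and we restrict to $\ell(\mu)\leq \dim V$ (terms with more parts vanish automatically as polynomials in $\dim V$ variables).

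To prove this identity I would use the tableau definition $s_\lambda = \sum_{T\in \text{SSYT}(\lambda)} x^T$. The product $s_\pi\cdot h_d$ counts pairs $(T,w)$ where $T$ is a semistandard Young tableau of shape $\pi$ and $w = w_1\leq w_2\leq \cdots \leq w_d$ is a weakly increasing word. The plan is to construct a weight-preserving bijection from such pairs to semistandard tableaux $U$ of shapes $\mu$ as above, by successively row-inserting the letters $w_1,\ldots,w_d$ into $T$ using the RSK row-insertion algorithm. The inverse map reverses the insertions by applying reverse row bumping to the added boxes of the horizontal strip, read in right-to-left order, recovering $T$ together with the bumped letters as $w$.

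The step carrying the real content — and the one I expect to be the main obstacle — is showing that inserting a weakly increasing word of length $d$ into an SSYT produces a shape that differs from $\pi$ by a horizontal strip of size $d$, i.e.\ by at most one new box per column. This is the row-bumping lemma: if $w_1\leq w_2$ are inserted in order, then the new outer box created by inserting $w_2$ lies strictly to the right of, and weakly above, the one created by inserting $w_1$. A careful induction on the row being processed establishes this, and iterating over $w_1,\ldots,w_d$ gives the horizontal-strip condition. Once the row-bumping lemma is established, the horizontal-strip condition, the semistandard property of the output, and invertibility are all immediate, the symmetric-function identity follows, and the $GL(V)$-module decomposition is recovered by passing back through characters. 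For the applications to border rank later in the paper only the statement is needed, so one could equally well cite \cite{FultHarRep,LandTensor}.
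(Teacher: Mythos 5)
The paper does not prove this proposition at all: it is stated as background and attributed to the references \cite{FultHarRep,LandTensor}, so there is no internal proof to compare against. Your argument is the standard self-contained proof (the one given, for instance, in Fulton's \emph{Young Tableaux}): reduce to the Schur-function identity $s_\pi\cdot h_d=\sum_\mu s_\mu$ via characters and complete reducibility, then establish that identity by the RSK row-insertion bijection, with the row-bumping lemma supplying the horizontal-strip condition and the vanishing of $s_\mu$ in $\dim V$ variables when $\ell(\mu)>\dim V$ accounting for the length restriction. This is correct, and you have correctly identified the row-bumping lemma as the step carrying the content. The only point I would tighten is the claim that invertibility is ``immediate'': to know that reverse-bumping the strip boxes of an \emph{arbitrary} semistandard $U$ of shape $\mu$, right to left, produces a weakly increasing word and a semistandard tableau of shape $\pi$, you need the converse direction of the row-bumping lemma (reverse bumping from a box strictly to the right and weakly above ejects a weakly larger letter), which is a separate, if parallel, induction. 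By contrast, Fulton--Harris proves Pieri by manipulating the bialternant/determinantal formula for characters rather than by tableau combinatorics; your route is more elementary and constructive, at the cost of developing the insertion machinery, none of which is needed elsewhere in the paper since only the statement of the decomposition is used.
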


\begin{definition}

Let partitions $\lambda$ and $\mu$ be such that $S_\mu V\subset S_\lambda V\ox 
S_{(d)} V$.  Given $P\in S_{(d)} V$ we 
obtain a linear map $\mathcal{F}_{\lambda,\mu}(P): S_\lambda V\to S_\mu V$ called a {\bf Young Flattening} via 
projecting the Pieri product $S_{\lambda} V\ox P$ to $S_\mu V$.

\end{definition}

\begin{prop}
[Proposition 4.1 of \cite{LandOtt}]\label{LOprop} Let $[x^d]\in v_d(\mathbb{P}V)$ and assume 
that $\rank(\mathcal{F}_{\lambda,\mu}(x^d))=t$.  If 
$\underline{R}_s(P)\leq r$, then 
$\rank(\mathcal{F}_{\lambda,\mu}(P))\leq rt$.
\end{prop}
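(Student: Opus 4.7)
The plan is to exploit two structural facts about the map $P\mapsto \mathcal{F}_{\lambda,\mu}(P)$: it is $GL(V)$-equivariant in $P$, and it depends linearly on $P$. Combined with a semicontinuity argument, these give the result.

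First I would establish linearity. By construction, $\mathcal{F}_{\lambda,\mu}(P)$ is obtained by the composition of the $GL(V)$-equivariant multiplication map $S_\lambda V \to S_\lambda V \otimes S_{(d)}V$ sending $f \mapsto f \otimes P$ followed by the $GL(V)$-equivariant projection onto the $S_\mu V$ isotypic component. Both operations are linear in $P$, so $\mathcal{F}_{\lambda,\mu}(P+Q)=\mathcal{F}_{\lambda,\mu}(P)+\mathcal{F}_{\lambda,\mu}(Q)$ and $\mathcal{F}_{\lambda,\mu}(cP)=c\,\mathcal{F}_{\lambda,\mu}(P)$.

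Next I would handle the $GL(V)$-equivariance. For any $g\in GL(V)$ and any $y\in V$, one has $\mathcal{F}_{\lambda,\mu}((g\cdot y)^d)=g\cdot \mathcal{F}_{\lambda,\mu}(y^d)\cdot g^{-1}$ (with the appropriate actions on the source and target). In particular, all rank-one symmetric tensors $y^d$ lie in a single $GL(V)$-orbit in $\mathbb{P}(S^dV)$, so their flattenings all have the same rank. By hypothesis that common rank is $t$. Hence if $P=\sum_{i=1}^r v_i^d$ has symmetric rank at most $r$, linearity together with subadditivity of matrix rank gives
\[
\rank(\mathcal{F}_{\lambda,\mu}(P)) \leq \sum_{i=1}^r \rank(\mathcal{F}_{\lambda,\mu}(v_i^d)) = rt.
\]

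Finally I would pass from rank to border rank via semicontinuity. Fix $rt$ and consider the subset $\{T\in S^dV:\rank(\mathcal{F}_{\lambda,\mu}(T))\leq rt\}$. Since the entries of the matrix $\mathcal{F}_{\lambda,\mu}(T)$ depend polynomially (indeed linearly) on $T$, this subset is cut out by the vanishing of all $(rt+1)\times(rt+1)$ minors and is therefore Zariski closed. It contains the set of all $T$ with $R_s(T)\leq r$ by the previous step, hence contains its Zariski closure. By Definition \ref{brankdef}, any $P$ with $\underline{R}_s(P)\leq r$ lies in that closure, so $\rank(\mathcal{F}_{\lambda,\mu}(P))\leq rt$ as claimed.

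The argument is almost entirely formal; the only conceptual step is recognizing that $\mathcal{F}_{\lambda,\mu}$ is simultaneously linear and equivariant in its argument $P$, so that a rank-$r$ symmetric decomposition of $P$ translates into an expression for $\mathcal{F}_{\lambda,\mu}(P)$ as a sum of $r$ matrices of equal rank $t$. The semicontinuity step is routine but necessary for the upgrade from rank to border rank, and I expect no genuine obstacle beyond bookkeeping the representation-theoretic projections that define $\mathcal{F}_{\lambda,\mu}$.
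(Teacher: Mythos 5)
Your proof is correct and is essentially the standard argument for this result; the paper itself gives no proof, simply citing Proposition 4.1 of Landsberg--Ottaviani, whose proof proceeds exactly as you describe (linearity of $P\mapsto\mathcal{F}_{\lambda,\mu}(P)$, equivariance forcing all $[y^d]\in v_d(\mathbb{P}V)$ to yield flattenings of the same rank $t$, subadditivity of matrix rank, and Zariski-closedness of the locus cut out by the $(rt+1)\times(rt+1)$ minors to pass from rank to border rank). No gaps.
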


\section{A Preliminary Result}

A preliminary result will be presented to make the method clear and prove the bound in the case $n=4$. By Proposition 2.6, to find a high lower bound for $\underline{R}_s(\det_n)$, we need to define a flattening such that $\rank(\mathcal{F}(\det_n))$ is big and $\rank(\mathcal{F}(x^n))$ is small.  Given $n$ dimensional vector spaces $A$ and $B$, and $\alpha\in S^d(A\otimes B)^\ast$, we will write $\alpha\raisebox{1pt}{\intprod}\det_n$ to denote the tensor contraction of $\alpha$ and $\det_n$.

\begin{remark}
If $\alpha$ is a minor of the determinant in the dual space $(A\otimes B)^\ast$, then $\alpha\raisebox{1pt}{\intprod}\det_n$ is a minor on the complementary indices in the primal space.
\end{remark}

For a tensor $\beta\in S^{n-d}(A\otimes B)$, let $\widehat{\beta}\in (A\otimes B)\otimes S^{n-d-1}(A\otimes B)$ be the image of $\beta$ under partial polarization.  Let $X^i_j:=a_i\otimes b_j$ and for $I,J\subset [n]$ with $|I|=|J|=n-d$, let $\Delta^I_J$ denote the $(n-d)\times (n-d)$ minor on the indices in $I$ and $J$.

\begin{remark}
$\widehat{\Delta}^I_J=\textstyle\sum\limits_{\substack{i\in I \\ j\in J}}(-1)^{i+j}X^i_j\otimes \Delta^{I\smallsetminus\{i\}}_{J\smallsetminus\{j\}}$
\end{remark}

\begin{remark}
The ``standard'' flattening of the determinant is $\det_{d,n-d}:S^d(A\otimes B)^\ast\longrightarrow S^{n-d}(A\otimes B)$ defined by $\alpha\mapsto \alpha\raisebox{1pt}{\intprod}\det_n$. Then $\im (\det_{d,n-d})$ is spanned by the $(n-d)\times(n-d)$ minors of the determinant.
\end{remark}

Define the Young flattening
\begin{align*}
\textstyle\det_{d,n-d}^{\wedge 1}:\bigwedge^{n-d}A\otimes\bigwedge^{n-d}B\otimes(A\otimes B)&\longrightarrow\textstyle\bigwedge^{n-d-1}A\otimes\bigwedge^{n-d-1}B\otimes\bigwedge^2(A\otimes B)\\
\Delta^I_J\otimes v&\mapsto \textstyle\sum\limits_{\substack{i\in I \\ j\in J}}(-1)^{i+j}X^i_j\wedge v\otimes \Delta^{[n-d]\smallsetminus\{i\}}_{[n-d]\smallsetminus\{j\}}
\end{align*}
and extend linearly.

\begin{lemma}\label{prelimImageDecomp}
 $\im(\textstyle\det_{d,n-d}^{\wedge 1})$ is contained in\\
 \begin{align*}
  &S_{2,1^{n-d-1}}A\ox S_{1^{n-d+1}}B\oplus S_{1^{n-d+1}}A\ox S_{2,1^{n-d-1}}B \tag{*} \\
  \oplus&S_{2,1^{n-d-1}}A\ox S_{2,1^{n-d-1}}B.
 \end{align*}
\end{lemma}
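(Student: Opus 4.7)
The plan is to exploit the $GL(A) \times GL(B)$-equivariance of $\det_{d,n-d}^{\wedge 1}$ together with Schur's lemma. Since the map is manifestly equivariant, its image lies in the sum of the isotypic components of the target corresponding to irreducible $GL(A) \times GL(B)$-modules that also appear in the source. The whole proof then reduces to identifying these common irreducibles.

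First I would decompose the source. Pieri gives $\bigwedge^{n-d} A \ox A = S_{1^{n-d+1}} A \op S_{2, 1^{n-d-1}} A$, and likewise for $B$. Hence the source decomposes as four irreducible $GL(A) \times GL(B)$-pieces $S_\lambda A \ox S_\mu B$ with $\lambda, \mu \in \{1^{n-d+1},\, (2, 1^{n-d-1})\}$, each of multiplicity one.  Next, I would decompose the target via the $GL(A) \times GL(B)$-equivariant splitting $\bigwedge^2 (A \ox B) = \bigwedge^2 A \ox S^2 B \op S^2 A \ox \bigwedge^2 B$ together with Pieri on each tensor factor:
\begin{align*}
\bigwedge^{n-d-1} A \ox \bigwedge^2 A &= S_{1^{n-d+1}} A \op S_{2, 1^{n-d-1}} A \op S_{2,2,1^{n-d-3}} A, \\
\bigwedge^{n-d-1} A \ox S^2 A &= S_{2, 1^{n-d-1}} A \op S_{3, 1^{n-d-2}} A,
\end{align*}
and analogously on the $B$-side.

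The decisive observation is that the column-shape $S_{1^{n-d+1}}$ arises only from tensoring with $\bigwedge^2$, never with $S^2$. Therefore $S_{1^{n-d+1}} A$ can appear in the target only inside the $\bigwedge^2 A \ox S^2 B$ summand, whose $B$-factor $\bigwedge^{n-d-1} B \ox S^2 B$ contains no $S_{1^{n-d+1}} B$; by symmetry, the reverse also holds. Consequently the target contains no copy of $S_{1^{n-d+1}} A \ox S_{1^{n-d+1}} B$. Combined with the source decomposition, Schur's lemma confines the image to the three remaining common isotypic components, giving exactly (*). The only real obstacle is executing the Pieri bookkeeping correctly, in particular handling the degenerate shape $S_{2,2,1^{n-d-3}}$ for small $n-d$ and noting that $S_{2, 1^{n-d-1}} \ox S_{2, 1^{n-d-1}}$ occurs with multiplicity two in the target (one copy from each summand of $\bigwedge^2 (A \ox B)$), which is harmless for the containment claim but will matter later when computing ranks.
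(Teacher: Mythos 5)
Your proposal is correct and follows essentially the same route as the paper: decompose source and target as $GL(A)\times GL(B)$-modules, observe that only the three listed irreducibles occur in both, and invoke Schur's lemma. The paper simply states the two decompositions, whereas you derive them explicitly via the Cauchy splitting of $\bigwedge^2(A\otimes B)$ and Pieri; your extra remarks (the absence of $S_{1^{n-d+1}}A\otimes S_{1^{n-d+1}}B$ in the target, and the multiplicity two of $S_{2,1^{n-d-1}}A\otimes S_{2,1^{n-d-1}}B$) are accurate and consistent with the paper's computation.
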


\begin{proof}
 Decomposing $\bigwedge^{n-d}A\ox\bigwedge^{n-d}B\ox A\ox B$ as a $GL_n\times GL_n$-module we get
 
 \begin{align*}
 &S_{2,1^{n-d-1}}A\ox S_{1^{n-d+1}}B\oplus S_{1^{n-d+1}}A\ox S_{2,1^{n-d-1}}B \\
  \oplus&S_{2,1^{n-d-1}}A\ox S_{2,1^{n-d-1}}B\oplus S_{1^{n-d+1}}A\ox S_{1^{n-d+1}}
 \end{align*}
 
 and $\bigwedge^{n-d-1}A\ox\bigwedge^{n-d-1}B\ox\bigwedge^2(A\ox B)$ as $GL_n\times GL_n$-module decomposes as
 
 \begin{align*}
 &S_{1^{n-d+1}}A\ox S_{3,1^{n-d-2}}B\oplus S_{2,1^{n-d-1}}A\ox S_{3,1^{n-d-2}}B\\
 \oplus&S_{2,2,1^{n-d-3}}A\ox S_{3,1^{n-d-2}}B\oplus S_{1^{n-d+1}}A\ox S_{2,1^{n-d-1}}B\\
 \oplus&(S_{2,1^{n-d-1}}A\ox S_{2,1^{n-d-1}}B)^{\oplus 2}\oplus S_{2,2,1^{n-d-3}}A\ox S_{2,1^{n-d-1}}B\\
 \oplus&S_{3,1^{n-d-2}}A\ox S_{1^{n-d+1}}B\oplus S_{3,1^{n-d-2}}A\ox S_{2,1^{n-d-1}}B\\
 \oplus&S_{3,1^{n-d-2}}A\ox S_{2,2,1^{n-d-3}}B\oplus S_{2,1^{n-d-1}}A\ox S_{1^{n-d+1}}B\\
 \oplus&S_{2,1^{n-d-1}}A\ox S_{2,2,1^{n-d-3}}B
 \end{align*}
 
 The irreducible modules in Lemma \ref{prelimImageDecomp} are the only irreducible modules appearing in both decompositions.  By Schur's lemma, we conclude that the module $(*)$ must contain $\im(\textstyle\det_{d,n-d}^{\wedge 1})$.  
\end{proof}

It must now be verified for each irreducible module in $(*)$, that $\textstyle\det_{d,n-d}^{\wedge 1}$ is not the zero map on the module. Since each irreducible module appears with multiplicity 1, then for a given irreducible module with highest weight $\pi$, finding any highest weight vector $v\in\bigwedge^{n-d}A\ox\bigwedge^{n-d}B\ox(A\ox B)$ of weight $\pi$ such that $\textstyle\det_{d,n-d}^{\wedge 1}(v)\neq 0$ proves $\textstyle\det_{d,n-d}^{\wedge 1}$ is nonzero on the entire module.

\begin{lemma}
 $\textstyle\det_{d,n-d}^{\wedge 1}$ is an isomorphism on the irreducible module $S_{2,1^{n-d-1}}A\ox S_{2,1^{n-d-1}}B$.
\end{lemma}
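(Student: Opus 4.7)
The plan is exactly the template indicated in the paragraph preceding the lemma: exhibit a highest weight vector $v$ of weight $(2,1^{n-d-1};\,2,1^{n-d-1})$ in the source $\bigwedge^{n-d}A\ox\bigwedge^{n-d}B\ox(A\ox B)$, verify it lies in the multiplicity-one copy of $S_{2,1^{n-d-1}}A\ox S_{2,1^{n-d-1}}B$, compute $\det_{d,n-d}^{\wedge 1}(v)$, and show the result is nonzero. Once this is done, Schur's lemma together with the multiplicity-one statement forces the restriction of $\det_{d,n-d}^{\wedge 1}$ to this summand to be a nonzero $GL_n\times GL_n$-equivariant map, and hence an isomorphism onto its image.

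The natural choice is $v=(a_1\wdots a_{n-d})\ox(b_1\wdots b_{n-d})\ox(a_1\ox b_1)$. Its $GL(A)\times GL(B)$-weight is $(2,1^{n-d-1};\,2,1^{n-d-1})$, and a quick case check shows every raising operator $E^A_{ij}$, $E^B_{ij}$ with $i<j$ annihilates $v$ (either it acts as zero on each relevant tensor factor, or it produces a repeated wedge factor). Among the four source summands listed in the proof of Lemma \ref{prelimImageDecomp}, three have a $1^{n-d+1}$-factor and so cannot carry a weight with an entry equal to $2$; therefore $v$ must lie in the unique copy of $S_{2,1^{n-d-1}}A\ox S_{2,1^{n-d-1}}B$. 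Applying the definition of the map yields
\[
\det_{d,n-d}^{\wedge 1}(v)=\sum_{i,j=1}^{n-d}(-1)^{i+j}(a_i\ox b_j)\wedge(a_1\ox b_1)\ox\Delta^{[n-d]\smallsetminus\{i\}}_{[n-d]\smallsetminus\{j\}}.
\]
The $(i,j)=(1,1)$ contribution drops out; every other summand is a nonzero element of $\bigwedge^2(A\ox B)$ tensored with a basis element of $\bigwedge^{n-d-1}A\ox\bigwedge^{n-d-1}B$, and distinct $(i,j)$ produce distinct index complements, so the summands are linearly independent and the total is nonzero.

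By the remark immediately preceding the lemma, this nonvanishing at a single highest weight vector upgrades via Schur's lemma to the restriction of $\det_{d,n-d}^{\wedge 1}$ being nonzero, and hence injective, on the entire irreducible $S_{2,1^{n-d-1}}A\ox S_{2,1^{n-d-1}}B$, establishing the claimed isomorphism onto its image (one of the two copies in the target decomposition). The main nuisance I expect is keeping the sign conventions and the partial polarization formula honest for the map $\det_{d,n-d}^{\wedge 1}$ (with the formula in the definition to be read, in our specialization, with $I=J=[n-d]$); beyond this bookkeeping the verification is essentially mechanical once the right highest weight vector has been identified.
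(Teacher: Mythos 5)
Your proposal is correct and follows essentially the same route as the paper: the same highest weight vector (whose projection is $X^1_1\ox\Delta^{[n-d]}_{[n-d]}$), the same expansion of $\det_{d,n-d}^{\wedge 1}$ on it, and the same Schur's-lemma conclusion. Your nonvanishing argument is in fact slightly more complete than the paper's, which only observes that the single term $X^1_1\wedge X^1_2\ox\Delta^{[n-d]\smallsetminus\{1\}}_{[n-d]\smallsetminus\{2\}}$ cannot cancel.
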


\begin{proof}
 Consider $a_1\wedge\ldots\wedge a_{n-d}\otimes a_1\otimes b_1\wedge\ldots\wedge b_{n-d}\otimes b_1$, a highest weight vector of the irreducible module $S_{2,1^{n-d-1}}A\ox S_{2,1^{n-d-1}}B$. Its projection into $(A\ox B)\ox \bigwedge^{n-d}A\ox\bigwedge^{n-d}B$ is a nonzero multiple of
 \[
 X^1_1\ox\Delta^{[n-d]}_{[n-d]}.
 \]
 Then
 \begin{align*}
  &\textstyle\det_{d,n-d}^{\wedge 1}(X^1_1\ox\Delta^{[n-d]}_{[n-d]})\\
  &=\sum_{\substack{i\in[n-d]\\j\in[n-d]}}(-1)^{i+j}X^1_1\wedge X^i_j\otimes\Delta^{[n-d]\smallsetminus\{i\}}_{[n-d]\smallsetminus\{j\}}.
 \end{align*}
Note that the term $X^1_1\wedge X^1_2\otimes\Delta^{[n-d]\smallsetminus\{1\}}_{[n-d]\smallsetminus\{2\}}$ will not cancel in the sum.
\end{proof}

\begin{lemma}
 $\textstyle\det_{d,n-d}^{\wedge 1}$ is an isomorphism on the irreducible modules $S_{2,1^{n-d-1}}A\ox S_{1^{n-d+1}}B$ and by symmetry $S_{1^{n-d+1}}A\ox S_{2,1^{n-d-1}}B$ is not in the kernel.
\end{lemma}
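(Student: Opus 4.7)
The plan is to mirror the previous lemma: since $S_{2,1^{n-d-1}}A \otimes S_{1^{n-d+1}}B$ appears with multiplicity one in both the source and in the target decomposition $(*)$ of $\det_{d,n-d}^{\wedge 1}$, Schur's lemma reduces the assertion to exhibiting a single highest weight vector $w$ of the source with $\det_{d,n-d}^{\wedge 1}(w) \neq 0$.

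First I would write down such a $w$ explicitly. The summand $S_{2,1^{n-d-1}}A$ of $\bigwedge^{n-d}A \otimes A$ has highest weight vector $a_1 \wedge \cdots \wedge a_{n-d} \otimes a_1$, and the summand $S_{1^{n-d+1}}B$ of $\bigwedge^{n-d}B \otimes B$ has highest weight vector
\[
\sum_{k=1}^{n-d+1}(-1)^{k-1}\,(b_1 \wedge \cdots \widehat{b_k} \cdots \wedge b_{n-d+1}) \otimes b_k
\]
obtained by antisymmetrizing $b_1 \wedge \cdots \wedge b_{n-d+1}$. After combining and reorganizing as an element of $(A\otimes B) \otimes \bigwedge^{n-d}A \otimes \bigwedge^{n-d}B$ via $X^i_j = a_i \otimes b_j$, a highest weight vector takes the form
\[
w \;=\; \sum_{k=1}^{n-d+1}(-1)^{k-1}\, X^1_k \otimes \Delta^{[n-d]}_{[n-d+1]\setminus\{k\}}.
\]
Next I would apply $\det_{d,n-d}^{\wedge 1}$ termwise, obtaining a double sum of monomials of the shape $\pm\, X^i_j \wedge X^1_k \otimes \Delta^{[n-d]\setminus\{i\}}_{[n-d+1]\setminus\{k,j\}}$ indexed by $k \in [n-d+1]$, $i \in [n-d]$, and $j \in [n-d+1]\setminus\{k\}$. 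A convenient witness of non-vanishing is the output monomial
\[
X^2_1 \wedge X^1_{n-d+1} \otimes \Delta^{[n-d]\setminus\{2\}}_{\{2,3,\ldots,n-d\}}.
\]
Since $\det_{d,n-d}^{\wedge 1}$ always wedges the new factor $X^i_j$ against $X^1_k$ whose $a$-index is $1$, the presence of the factor $X^2_1$ (with $a$-index $2\neq 1$) forces $(i,j)=(2,1)$, and the columns remaining in the minor then force $k=n-d+1$. Hence exactly one summand in the double sum contributes this monomial, with a single nonzero coefficient, so $\det_{d,n-d}^{\wedge 1}(w)\neq 0$. By Schur's lemma and the multiplicity-one conclusion of Lemma~\ref{prelimImageDecomp}, $\det_{d,n-d}^{\wedge 1}$ is an isomorphism on $S_{2,1^{n-d-1}}A \otimes S_{1^{n-d+1}}B$; the statement for $S_{1^{n-d+1}}A \otimes S_{2,1^{n-d-1}}B$ follows by interchanging the roles of $A$ and $B$ throughout.

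The main obstacle I anticipate is the cancellation bookkeeping. Unlike the previous lemma, whose highest weight vector was the single monomial $X^1_1 \otimes \Delta^{[n-d]}_{[n-d]}$, here $w$ is already an alternating sum over $k$, so in principle two different summands of $w$ could produce overlapping output monomials that cancel. The crux of the argument is therefore the choice of a witness monomial with a distinctive $a$-index pattern, as above, so that the triple $(k,i,j)$ producing it is genuinely unique.
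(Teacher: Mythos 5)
Your proposal is correct and follows essentially the same route as the paper: the same highest weight vector, the same projected form $\sum_k \pm X^1_k\otimes\Delta^{[n-d]}_{[n-d+1]\smallsetminus\{k\}}$, and the same strategy of exhibiting a single surviving monomial in the output of the flattening. The only difference is the choice of witness --- you use $X^2_1\wedge X^1_{n-d+1}\otimes\Delta^{[n-d]\smallsetminus\{2\}}_{\{2,\ldots,n-d\}}$, whose distinctive $a$-index $2$ pins down a unique contributing triple, whereas the paper uses $X^1_1\wedge X^1_2\otimes\Delta^{[n-d]\smallsetminus\{1\}}_{[n-d+1]\smallsetminus\{1,2\}}$, for which one must check that the two contributing terms add rather than cancel.
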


\begin{proof}
 Consider $a_1\wedge\ldots\wedge a_{n-d}\otimes a_1\otimes b_1\wedge\ldots\wedge b_{n-d+1}$, a highest weight vector of the irreducible module $S_{2,1^{n-d-1}}A\ox S_{1^{n-d+1}}B$. Its projection into $(A\ox B)\ox \bigwedge^{n-d}A\ox\bigwedge^{n-d}B$ is a nonzero multiple of
 \[
 \sum_{j\in[n-d+1]}(-1)^{j}X^1_j\otimes\Delta^{[n-d]}_{[n-d+1]\smallsetminus\{j\}}.
 \]
 Then
 \begin{align*}
  &\textstyle\det_{d,n-d}^{\wedge 1}(\sum_{j\in[n-d+1]}(-1)^{j}X^1_j\otimes\Delta^{[n-d]}_{[n-d+1]\smallsetminus\{j\}})\\
  &=\sum_{j\in[n-d+1]}\sum_{\substack{i\in[n-d]\\k\in[n-d+1]\smallsetminus\{j\}}}(-1)^{j}(-1)^{i+\tilde{k}}X^1_j\wedge X^i_k\otimes\Delta^{[n-d]\smallsetminus\{i\}}_{[n-d+1]\smallsetminus\{j,k\}}
 \end{align*}
where
\[
\tilde{k}:=
\begin{cases}k,&k<j\\k-1,&j<k.
\end{cases}\]
Note that $X^1_1\wedge X^1_2\otimes\Delta^{[n-d]\smallsetminus\{1\}}_{[n-d+1]\smallsetminus\{1,2\}}$ does not cancel in the sum.
\end{proof}

Finding a value of $d$ with respect to $n$ that maximizes the rank of $\textstyle\det_{d,n-d}^{\wedge 1}$ and dividing by the rank of $[x^n]^{\wedge 1}_{d,n-d}$ we demonstrate the following theorem.

\begin{theorem}
 For $n\geq 3$, the following are lower bounds on the symmetric border rank of the determinant, $\underline{R}_s(\det_n)$.\\
 For $n$ even:
 \[
 \underline{R}_s(\det\nolimits_n)\geq\left(1+\frac{4}{(-1+n)(2+n)^2}\right)\binom{n}{\frac{n}{2}}^2
 \]
 For $n$ odd:
 \[
 \underline{R}_s(\det\nolimits_n)\geq\left(1+\frac{8}{(-1+n)(3+n)^2}\right)\binom{n}{\frac{n-1}{2}}^2.
 \]
\end{theorem}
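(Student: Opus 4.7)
The plan is to apply Proposition~\ref{LOprop} to the Young flattening $\mathcal{F}:=\textstyle\det_{d,n-d}^{\wedge 1}$, which gives
\[
\underline{R}_s(\textstyle\det_n)\;\geq\;\frac{\rank\mathcal{F}(\textstyle\det_n)}{\rank\mathcal{F}(x^n)}
\]
for an appropriate $[x^n]\in v_n(\mathbb{P}(A\ox B))$. I compute each side separately and then optimize the choice of $d$.

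For the numerator, Lemma~\ref{prelimImageDecomp} bounds $\im\mathcal{F}(\textstyle\det_n)$ from above by the sum $(*)$ of three $GL_n\times GL_n$-irreducibles, and the two lemmas immediately following it show that $\mathcal{F}(\textstyle\det_n)$ is nonzero on each of those summands. Since each summand appears with multiplicity one in the source, Schur's lemma forces $\mathcal{F}(\textstyle\det_n)$ to be injective on each, so
\[
\rank\mathcal{F}(\textstyle\det_n)=2\dim S_{2,1^{n-d-1}}\bbc^n\cdot\dim S_{1^{n-d+1}}\bbc^n+\bigl(\dim S_{2,1^{n-d-1}}\bbc^n\bigr)^2,
\]
with the individual dimensions read off from the hook-length formula.

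For the denominator, $\rank\mathcal{F}(x^n)$ is obtained by substituting $x^n$ into the construction defining $\textstyle\det_{d,n-d}^{\wedge 1}$. Since the flattening is $GL(A)\times GL(B)$-equivariant, this rank depends only on the matrix rank of $x\in A\ox B$, and a direct computation at an appropriate $x$ yields a clean closed form: contraction with $x^n$ collapses most of the defining sum, leaving only the surviving wedge-times-minor terms, whose count is a small rational function of $n$ and $d$.

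Taking the quotient, choosing $d=\lceil n/2\rceil$ to maximize $\binom{n}{n-d}$, and splitting into the cases $n$ even and $n$ odd yields the two formulas in the theorem. The $\binom{n}{\lfloor n/2\rfloor}^2$ factor arises from the asymptotically dominant $S_{2,1^{n-d-1}}\ox S_{1^{n-d+1}}$ summands in the numerator, while the additional $S_{2,1^{n-d-1}}\ox S_{2,1^{n-d-1}}$ piece produces the $(1+\text{correction})$ improvement over the previously known bound. The main obstacle is the bookkeeping: reducing the three dimension products together with the denominator to the compact rational functions stated, and handling the parity of $n$ consistently, since the hook-length products depend on whether $d$ or $n-d$ is larger. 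A miscount in the denominator or in any one of the three hook dimensions would propagate directly into the stated rational coefficient, so each piece has to be tracked carefully and then assembled.
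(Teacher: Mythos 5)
Your overall strategy is exactly the paper's: bound $\rank\textstyle\det_{d,n-d}^{\wedge 1}$ below by showing the flattening is nonzero (hence, by multiplicity one and Schur, injective) on each of the three irreducibles in $(*)$, compute $\rank\mathcal{F}(x^n)$, divide, and optimize over $d$. Your numerator formula is right, and the denominator you leave vague is in fact $n^2-1$, independent of $d$ (the image of $\mathcal{F}(x^n)$ is spanned by $x^{n-d-1}\ox(x\wedge v)$, $v\in A\ox B$, for any nonzero $x$ — this is the $\wedge 1$ analogue of Remark \ref{remVerRank}).

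The genuine problem is your choice $d=\lceil n/2\rceil$ for odd $n$. First, the criterion ``maximize $\binom{n}{n-d}$'' does not select a unique $d$ when $n$ is odd, since $d=(n\pm 1)/2$ give the same binomial coefficient; the quantity to maximize is the full image dimension, and the two choices give genuinely different answers because the construction is not symmetric in $d\leftrightarrow n-d$. Concretely, using $\dim S_{2,1^{n-d-1}}\bbc^n=\frac{(n-d)(n+1)}{n-d+1}\binom{n}{d}$ and $\dim S_{1^{n-d+1}}\bbc^n=\binom{n}{d-1}$, the choice $d=(n+1)/2$ gives total image dimension
\[
\bigl(2(n-1)+(n-1)^2\bigr)\binom{n}{\tfrac{n-1}{2}}^2=(n^2-1)\binom{n}{\tfrac{n-1}{2}}^2,
\]
so after dividing by $n^2-1$ you recover only the classical bound $\binom{n}{(n-1)/2}^2$ with no correction term. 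The stated improvement requires $d=\lfloor n/2\rfloor=(n-1)/2$, which yields $\frac{(n+1)^2(n^2+4n-1)}{(n+3)^2}\binom{n}{(n-1)/2}^2$ and, after division by $n^2-1$, exactly the factor $1+\frac{8}{(n-1)(n+3)^2}$. (For even $n$ your choice coincides with $n/2$ and the computation does produce the stated $1+\frac{4}{(n-1)(n+2)^2}$.) So as written, your proof establishes the even case but fails to establish the odd case; the fix is simply to take $d=\lfloor n/2\rfloor$ throughout, as the paper does.
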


\section{Proof of Main Theorem}

To prove the main theorem, we use the map
\[
 \yfdet:\bigwedge^{n-d}A\otimes\bigwedge^{n-d}B\otimes\bigwedge^2(A\otimes B)\longrightarrow\bigwedge^{n-d-1}A\otimes\bigwedge^{n-d-1}B\otimes\bigwedge^3(A\otimes B)
\]
defined by
\[
\Delta^I_J\otimes v\wedge w\mapsto \textstyle\sum\limits_{\substack{i\in I \\ j\in J}}(-1)^{i+j}X^i_j\wedge v\wedge w\otimes \Delta^{[n-d]\smallsetminus\{i\}}_{[n-d]\smallsetminus\{j\}}
\]
and extended linearly. It remains to find the rank of $\yfdet$.

\begin{lemma}\label{lemmaDecomp}
$\im(\yfdet)$ is contained in
\begin{align*}
&S_{3,1^{n-d-1}}A\otimes S_{1^{n-d+2}}B\oplus S_{1^{n-d+2}}A\otimes S_{3,1^{n-d-1}}B\oplus S_{3,1^{n-d-1}}A\otimes S_{2,1^{n-d}}B\\
\oplus &S_{2,1^{n-d}}A\otimes S_{3,1^{n-d-1}}B\oplus S_{3,1^{n-d-1}}A\otimes S_{2,2,1^{n-d-2}}B\\
\oplus &S_{2,2,1^{n-d-2}}A\otimes S_{3,1^{n-d-1}}B\oplus S_{2,1^{n-d+1}}A\otimes S_{2,1^{n-d+1}}B\\
\oplus &S_{2,1^{n-d+1}}A\otimes S_{2,2,1^{n-d-1}}B\oplus S_{2,2,1^{n-d-1}}A\otimes S_{2,1^{n-d+1}}B
\end{align*}
\end{lemma}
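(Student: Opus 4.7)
The plan is to mimic the proof of Lemma~\ref{prelimImageDecomp}: decompose both the domain and codomain of $\yfdet$ into irreducible $GL(A)\times GL(B)$-modules and apply Schur's lemma, which forces $\im(\yfdet)$ to lie in the sum of irreducible summands that appear in both decompositions.

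For the domain, Cauchy's formula gives $\bigwedge^2(A\ox B)\cong(\bigwedge^2 A\ox S^2 B)\op(S^2 A\ox\bigwedge^2 B)$. Applying Pieri, the vertical strip of two boxes added to $\bigwedge^{n-d}A$ produces $S_{1^{n-d+2}}A\op S_{2,1^{n-d}}A\op S_{2,2,1^{n-d-2}}A$, while the horizontal strip of two boxes produces $S_{2,1^{n-d}}A\op S_{3,1^{n-d-1}}A$; identical formulas hold for $B$. Combining, we obtain the full $GL(A)\times GL(B)$-decomposition of the domain.

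For the codomain, Cauchy's formula gives $\bigwedge^3(A\ox B)\cong(\bigwedge^3 A\ox S^3 B)\op(S_{2,1}A\ox S_{2,1}B)\op(S^3 A\ox\bigwedge^3 B)$. The summands involving $\bigwedge^3$ and $S^3$ are handled by Pieri (vertical and horizontal strips of three boxes), analogously to the preliminary argument. The new ingredient is $\bigwedge^{n-d-1}A\ox S_{2,1}A$, which requires Littlewood-Richardson; the cleanest route is to use $A\ox S^2 A=S^3 A\op S_{2,1}A$, reducing the computation to two Pieri applications and giving $S_{2,1^{n-d}}A\op S_{2,2,1^{n-d-2}}A\op S_{3,1^{n-d-1}}A\op S_{3,2,1^{n-d-3}}A$.

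Once both decompositions are in hand, Schur's lemma reduces the proof to enumerating the irreducibles $S_\pi A\ox S_\rho B$ that appear in both. The main obstacle is the combinatorial bookkeeping: the codomain contains the genuinely mixed factor $S_{2,1}A\ox S_{2,1}B$, which is absent from the preliminary case and generates several new isotypic components, so extra care is needed to identify exactly which pairs $(\pi,\rho)$ are common to the two decompositions. Collecting the surviving pairs recovers the list in the statement.
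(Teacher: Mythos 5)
Your approach is exactly the paper's: decompose the domain and codomain of $\yfdet$ as $GL(A)\times GL(B)$-modules and invoke Schur's lemma to restrict the image to the irreducibles common to both, and your Cauchy/Pieri bookkeeping (including the $A\ox S^2A=S^3A\op S_{2,1}A$ trick for the mixed factor) is correct. The one refinement the paper makes explicit and you should too: it is not enough that an irreducible appear in both decompositions — to land in the stated multiplicity-free sum you must check that each common irreducible occurs with multiplicity one in at least one of the two sides (this matters for $S_{2,1^{n-d}}A\ox S_{2,1^{n-d}}B$, which occurs twice in the domain but once in the codomain).
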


\begin{proof}
Decomposing $\bigwedge^{n-d}A\otimes\bigwedge^{n-d}B\otimes\bigwedge^2(A\otimes B)$ and $\bigwedge^{n-d-1}A\otimes\bigwedge^{n-d-1}B\otimes\bigwedge^3(A\otimes B)$ as $GL_n\times GL_n$-modules, one sees that only the irreducibles listed in the lemma appear in both decompositions and that the minimum multiplicity each appears with is $1$.  By Schur's Lemma, no other irreducible may be in the image.
\end{proof}

The above lemma gives us an idea as to the largest lower bound that this particular flattening could achieve. However, we are not guaranteed that this is the image.  To proceed, for each irreducible module in the lemma we must find a highest weight vector and compute $\yfdet$ on this vector.  Note since each module appears with multiplicity $1$, finding a single highest weight vector of the correct highest weight on which the flattening is nonzero is sufficient.\\

\begin{lemma}
 $\yfdet$ is an isomorphism on the irreducible module $S_{3,1^{n-d-1}}A\otimes S_{1^{n-d+2}}B$ and by symmetry on $S_{1^{n-d+2}}A\otimes S_{3,1^{n-d-1}}B$.
\end{lemma}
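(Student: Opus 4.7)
The plan is to mirror the strategy used in the two preliminary lemmas: I will exhibit an explicit highest weight vector $v$ of the isotypic component $S_{3,1^{n-d-1}}A \otimes S_{1^{n-d+2}}B$ inside the domain of $\yfdet$, compute $\yfdet(v)$, and locate one monomial in the expansion that cannot cancel. Lemma~\ref{lemmaDecomp} guarantees that $S_{3,1^{n-d-1}}A \otimes S_{1^{n-d+2}}B$ appears with multiplicity one in the codomain, so by Schur's lemma the restriction of $\yfdet$ to this isotypic component is either zero or an isomorphism; producing a single uncancelled term forces the isomorphism. The claim for $S_{1^{n-d+2}}A \otimes S_{3,1^{n-d-1}}B$ is then immediate by interchanging the roles of $A$ and $B$.

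To build $v$, I would work backwards from the required weights. The $A$-weight $(3,1^{n-d-1})$ forces $\bigwedge^{n-d}A$ to be $a_1 \wedge \ldots \wedge a_{n-d}$ and forces both slots of $\bigwedge^2(A\otimes B)$ to carry $a_1$ as their $A$-part, because $a_1$ must appear three times in total. The $B$-weight $(1^{n-d+2})$ then distributes $b_1, \ldots, b_{n-d+2}$ across $\bigwedge^{n-d}B$ and the two $X^1_{\bullet}$ factors. Combining these constraints with the Pieri inclusion $\bigwedge^{n-d+2}B \hookrightarrow \bigwedge^{n-d}B \otimes \bigwedge^2 B$ yields the natural candidate
\[
 v \;=\; \sum_{1 \le p < q \le n-d+2} (-1)^{p+q}\, \Delta^{[n-d]}_{[n-d+2]\smallsetminus\{p,q\}} \otimes X^1_p \wedge X^1_q,
\]
which up to scalar is the unique highest weight vector of the stated type in the domain.

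Applying the definition of $\yfdet$ summand by summand expresses $\yfdet(v)$ as a signed sum of terms of the form
\[
\pm\, X^i_j \wedge X^1_p \wedge X^1_q \otimes \Delta^{[n-d]\smallsetminus\{i\}}_{([n-d+2]\smallsetminus\{p,q\})\smallsetminus\{j\}}.
\]
The monomial I would single out is the one produced by the quadruple $(p,q,i,j) = (n-d+1,\,n-d+2,\,2,\,1)$, namely $\pm\, X^2_1 \wedge X^1_{n-d+1} \wedge X^1_{n-d+2} \otimes \Delta^{\{1,3,\ldots,n-d\}}_{\{2,3,\ldots,n-d\}}$. No other quadruple can reproduce this three-fold wedge, because any alternative would force one of $X^1_{p'}$ or $X^1_{q'}$ to equal $X^2_1$, which is impossible since the $A$-factors $a_1$ and $a_2$ are distinct. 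The distinguishing monomial therefore survives with nonzero coefficient and $\yfdet(v) \neq 0$.

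The main obstacle I anticipate in turning this sketch into a fully rigorous argument is the Koszul-sign bookkeeping, both in verifying that the $v$ above is genuinely annihilated by every positive root operator $E^B_{ij}$ (which amounts to matching the signs from the Pieri embedding of $\bigwedge^{n-d+2}B$ against the action on the $X^1_p \wedge X^1_q$ slots) and in pinning down the precise sign of the distinguishing monomial. All of the conceptual content, however, lies in the two observations that $a_1 \neq a_2$ prevents cancellation and that the isotypic component has multiplicity one in both source and target; once those are in hand, the remaining work is entirely routine.
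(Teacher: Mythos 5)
Your proposal follows the paper's proof essentially verbatim: the same highest weight vector $\sum_{p<q}(-1)^{p+q}X^1_p\wedge X^1_q\otimes\Delta^{[n-d]}_{[n-d+2]\smallsetminus\{p,q\}}$, the same multiplicity-one/Schur reduction, and the same strategy of exhibiting one surviving monomial in $\yfdet(v)$. Your witness monomial $X^2_1\wedge X^1_{n-d+1}\wedge X^1_{n-d+2}\otimes\Delta^{\{1,3,\ldots,n-d\}}_{\{2,\ldots,n-d\}}$ is in fact a slightly cleaner choice than the paper's $X^1_1\wedge X^1_2\wedge X^1_3\otimes\Delta^{[n-d]\smallsetminus\{1\}}_{[n-d+2]\smallsetminus\{1,2,3\}}$, since it arises from a unique index quadruple and so requires no sign check to rule out cancellation.
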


\begin{proof}
 Consider $a_1\wdots a_{n-d}\ox a_1\ox a_1\ox b_1\wdots b_{n-d+2}$, a highest weight vector of the irreducible module $S_{3,1^{n-d-1}}A\otimes S_{1^{n-d+2}}B$. Its projection into $\bigwedge^{2}(A\ox B)\ox\bigwedge^{n-d}A\ox\bigwedge^{n-d}B$ is a multiple of
 \[
\textstyle\sum\limits_{1\leq i<j\leq n-d+2}(-1)^{i+j}X^1_i\wedge X^1_j\ox\Delta^{[n-d]}_{[n-d+2]\smallsetminus\{i,j\}}.
\]
Then
\begin{align*}
&\yfdet\left(\textstyle\sum\limits_{1\leq i<j\leq n-d+2}(-1)^{i+j}X^1_i\wedge X^1_j\ox\Delta^{[n-d]}_{[n-d+2]\smallsetminus\{i,j\}}\right)\\
=\textstyle\sum\limits_{1\leq i<j\leq n-d+2}&\left(\textstyle\sum\limits_{h=1}^{n-d}\textstyle\sum\limits_{k\in[n-d+2]\smallsetminus\{i,j\}}(-1)^{\tilde{k}+h}(-1)^{i+j}X^1_i\wedge X^1_j\wedge X^h_k\ox\Delta^{[n-d]\smallsetminus\{h\}}_{[n-d+2]\smallsetminus\{i,j,k\}}\right)
\end{align*}
where \[
\tilde{k}:=
\begin{cases} k,&k<i<j\\ k-1,&i<k<j\\k-2,&i<j<k
.\end{cases}\]
Then note that the term $X^1_1\wedge X^1_2\wedge X^1_3\ox\Delta^{[n-d]\smallsetminus\{1\}}_{[n-d+2]\smallsetminus\{1,2,3\}}$ does not cancel in the sum.
\end{proof}

\begin{lemma}
 $\yfdet$ is an isomorphism on the irreducible module $S_{3,1^{n-d-1}}A\otimes S_{2,1^{n-d}}B$ and by symmetry on $S_{2,1^{n-d}}A\otimes S_{3,1^{n-d-1}}B$.
\end{lemma}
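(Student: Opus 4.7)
The plan mirrors the previous two lemmas: produce a highest weight vector $v$ of $S_{3,1^{n-d-1}}A\otimes S_{2,1^{n-d}}B$ inside the domain of $\yfdet$, compute $\yfdet(v)$, and identify a single basis element in the image whose coefficient is nonzero and which can only be produced by one choice of indices. The $B$-side case is then immediate from the $A\leftrightarrow B$ symmetry of $\yfdet$.

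The new subtlety compared to the previous lemma is that $S_{2,1^{n-d}}B$ is no longer a single wedge power of $B$, so the HWV is not a monomial in $\bigwedge^{n-d}A\otimes\bigwedge^{n-d}B\otimes\bigwedge^2(A\otimes B)$. I would first write it in the richer tensor space $\bigwedge^{n-d}A\otimes S^2 A\otimes \bigwedge^{n-d+1}B\otimes B$ as the pure tensor
\[
w = a_1\wdots a_{n-d}\otimes a_1\otimes a_1\otimes b_1\wdots b_{n-d+1}\otimes b_1,
\]
then project equivariantly to the domain of $\yfdet$ (Koszul-type: move one $b_i$ out of the length-$(n-d+1)$ wedge to pair it with the free $b_1$, and embed the resulting $S^2A\otimes\bigwedge^2B$ piece into $\bigwedge^2(A\otimes B)$). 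Up to a nonzero scalar, this yields
\[
v = \sum_{k=2}^{n-d+1}(-1)^{k}\,\Delta^{[n-d]}_{[n-d+1]\setminus\{k\}}\otimes X^1_1\wedge X^1_k.
\]
One can verify $v$ is a HWV directly: the $A$-raising operators act trivially term by term, and the alternating coefficients $c_j=(-1)^j$ are precisely those forced by the recursion $c_{k+1}=-c_k$ arising from annihilation under the operators $E^B_{k,k+1}$.

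Next I would expand $\yfdet(v)$ in the natural basis and single out the coefficient of $X^1_1\wedge X^1_3\wedge X^2_2\otimes\Delta^{[n-d]\setminus\{2\}}_{[n-d+1]\setminus\{2,3\}}$. I claim this basis element is produced only from the triple $(i,j,k)=(2,2,3)$: in the wedge $X^i_j\wedge X^1_1\wedge X^1_k$, the factor $X^1_k$ has $A$-index $1$, so $X^1_k\in\{X^1_1,X^1_3\}$, and $k=1$ makes the wedge vanish, forcing $k=3$; then $X^i_j=X^2_2$ forces $(i,j)=(2,2)$, and the $\Delta$-factor is then determined. Consequently the coefficient is $\pm 1\neq 0$ and $\yfdet(v)\neq 0$. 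Since $S_{3,1^{n-d-1}}A\otimes S_{2,1^{n-d}}B$ appears with multiplicity one in both the domain and (by Lemma \ref{lemmaDecomp}) the codomain, Schur's lemma then implies $\yfdet$ is an isomorphism onto this summand.

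The main obstacle is pinning down the correct HWV: unlike in the previous lemma, $v$ is a signed sum of $n-d$ distinct weight vectors, and the specific alternating coefficients are essential (any other combination fails to be killed by some $E^B_{k,k+1}$). Once this is settled, identifying the surviving non-cancelling term and its sign is routine bookkeeping analogous to that in the preceding proof.
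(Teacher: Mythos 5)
Your proposal is correct and follows essentially the same route as the paper: the same highest weight vector $a_1\wdots a_{n-d}\ox a_1\ox a_1\ox b_1\wdots b_{n-d+1}\ox b_1$, the same projection $\sum_{i=2}^{n-d+1}(-1)^i X^1_1\wedge X^1_i\ox\Delta^{[n-d]}_{[n-d+1]\smallsetminus\{i\}}$, and the same expand-and-find-a-surviving-term argument, with multiplicity one plus Schur's lemma finishing the job. The only (harmless) difference is the witness term: you use $X^1_1\wedge X^1_3\wedge X^2_2\ox\Delta^{[n-d]\smallsetminus\{2\}}_{[n-d+1]\smallsetminus\{2,3\}}$, which arises from a unique index triple, whereas the paper uses $X^1_1\wedge X^1_3\wedge X^1_2\ox\Delta^{[n-d]\smallsetminus\{1\}}_{[n-d+1]\smallsetminus\{2,3\}}$, which arises from two triples whose contributions reinforce rather than cancel.
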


\begin{proof}
 Consider $a_1\wdots a_{n-d}\ox a_1\ox a_1\ox b_1\wdots b_{n-d+1}\ox b_1$, a highest weight vector of the irreducible module $S_{3,1^{n-d-1}}A\otimes S_{2,1^{n-d}}B$. Its projection to $\bigwedge^{2}(A\ox B)\ox\bigwedge^{n-d}A\ox\bigwedge^{n-d}B$ is a multiple of
\[
\textstyle\sum\limits_{i=2}^{n-d+1}(-1)^{i}X^1_1\wedge X^1_i\ox\Delta^{[n-d]}_{[n-d+1]\smallsetminus\{i\}}.
\]
Then
\begin{align*}
 &\yfdet\left(\textstyle\sum\limits_{i=2}^{n-d+1}(-1)^{i}X^1_1\wedge X^1_i\ox\Delta^{[n-d]}_{[n-d+1]\smallsetminus\{i\}}\right)\\
 &=\textstyle\sum\limits_{k=1}^{n-d}\textstyle\sum\limits_{i=2}^{n-d+1}
\textstyle\sum\limits_{j\in[n-d+1]\smallsetminus\{i\}}(-1)^{i}(-1)^{\tilde{j}+k}X^1_1\wedge X^1_i\wedge X^k_j\ox\Delta^{[n-d]\smallsetminus\{k\}}_{[n-d+1]\smallsetminus\{i,j\}}
\end{align*}
where \[
\tilde{j}:=
\begin{cases}j,&j<i\\j-1,&i<j
.\end{cases}\]
The observation that $X^1_1\wedge X^1_3\wedge X^1_2\ox\Delta^{[n-d]\smallsetminus\{1\}}_{[n-d+1]\smallsetminus\{2,3\}}$ does not cancel demonstrates the lemma.
\end{proof}

\begin{lemma}
 $\yfdet$ is an isomorphism on the irreducible module $S_{3,1^{n-d-1}}A\otimes S_{2,2,1^{n-d-2}}B$ and by symmetry on $S_{2,2,1^{n-d-2}}A\otimes S_{3,1^{n-d-1}}B$.
\end{lemma}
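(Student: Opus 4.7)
The plan is to follow the pattern established in the preceding three lemmas. By Lemma~\ref{lemmaDecomp} the isotypic component $S_{3,1^{n-d-1}}A\ox S_{2,2,1^{n-d-2}}B$ occurs with multiplicity one in both the domain and codomain of $\yfdet$, so by Schur's lemma it suffices to exhibit a single highest weight vector $v$ of this type in the domain and check that $\yfdet(v)\neq 0$.

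First I would write down the highest weight vector. The $A$-partition $(3,1^{n-d-1})$ is a hook, so its highest weight vector carries the content $a_1^3\,a_2\cdots a_{n-d}$ with the two extra $a_1$'s appearing symmetrically; the $B$-partition $(2,2,1^{n-d-2})$ has content $b_1^2\,b_2^2\,b_3\cdots b_{n-d}$ with the extra $b_1$ and $b_2$ appearing antisymmetrically (its second column has length two). The extra factors must therefore be supplied by the $S^2A\ox\bigwedge^2B$ summand of $\bigwedge^2(A\ox B)$, and the image of $a_1\cdot a_1\ox b_1\wedge b_2$ under the inclusion $S^2A\ox\bigwedge^2B\hookrightarrow\bigwedge^2(A\ox B)$ is a nonzero scalar multiple of $X^1_1\wedge X^1_2$. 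Thus my candidate is
\[
v=\Delta^{[n-d]}_{[n-d]}\ox X^1_1\wedge X^1_2,
\]
and one checks directly that $v$ has weight $(3,1^{n-d-1})\otimes(2,2,1^{n-d-2})$ and is annihilated by every simple raising operator $E^A_{i,i+1}$ and $E^B_{i,i+1}$ (each either forces a repeated factor inside some wedge or acts trivially on every tensor factor).

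Second, I would apply the flattening:
\[
\yfdet(v)=\sum_{\substack{i\in[n-d]\\j\in[n-d]}}(-1)^{i+j}\,X^i_j\wedge X^1_1\wedge X^1_2\ox\Delta^{[n-d]\smallsetminus\{i\}}_{[n-d]\smallsetminus\{j\}}.
\]
The second tensor factor $\Delta^{[n-d]\smallsetminus\{i\}}_{[n-d]\smallsetminus\{j\}}$ is a distinct basis vector of $\bigwedge^{n-d-1}A\ox\bigwedge^{n-d-1}B$ for each pair $(i,j)$, so no cancellation can occur between summands. It then suffices to select a pair $(i,j)$ with $X^i_j\wedge X^1_1\wedge X^1_2\neq 0$. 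Taking $(i,j)=(1,3)$ produces the surviving term $X^1_1\wedge X^1_2\wedge X^1_3\ox\Delta^{[n-d]\smallsetminus\{1\}}_{[n-d]\smallsetminus\{3\}}$, which is a nonzero basis element of the codomain. Hence $\yfdet(v)\neq 0$ and Schur's lemma yields the isomorphism.

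The main obstacle is really the first step. Unlike the $B$-partitions $(1^{n-d+2})$ and $(2,1^{n-d})$ that arose in the earlier lemmas, the shape $(2,2,1^{n-d-2})$ is not a hook, so identifying its highest weight vector requires looking inside the non-hook piece $\bigwedge^{n-d}B\ox\bigwedge^2B$, which in turn forces the extra factor in $\bigwedge^2(A\ox B)$ to be drawn from the $S^2A\ox\bigwedge^2B$ summand rather than from $\bigwedge^2A\ox S^2B$. Once this bookkeeping is correctly resolved and packaged as $X^1_1\wedge X^1_2$, the remaining computation is a direct translation of the argument used in the previous lemma.
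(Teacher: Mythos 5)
Your proposal is correct and follows essentially the same route as the paper: the paper also takes the highest weight vector whose projection is $X^1_1\wedge X^1_2\ox\Delta^{[n-d]}_{[n-d]}$, applies $\yfdet$ to obtain the identical double sum, and observes that the term $X^1_1\wedge X^1_2\wedge X^1_3\ox\Delta^{[n-d]\smallsetminus\{1\}}_{[n-d]\smallsetminus\{3\}}$ survives. Your extra explanation of why the highest weight vector lives in the $S^2A\ox\bigwedge^2B$ summand of $\bigwedge^2(A\ox B)$ is a welcome clarification but does not change the argument.
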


\begin{proof}
 Consider $a_1\wdots a_{n-d}\ox a_1\ox a_1\ox b_1\wdots b_{n-d}\ox b_1\wedge b_2$, a highest weight vector of the irreducible module $S_{3,1^{n-d-1}}A\otimes S_{2,2,1^{n-d-2}}B$. Its projection to $\bigwedge^{2}(A\ox B)\ox\bigwedge^{n-d}A\ox\bigwedge^{n-d}B$ is a multiple of
 \[
 X_1^1\wedge X^1_2\ox\Delta^{[n-d]}_{[n-d]}.
 \]
 Then
 \[
 \yfdet\left(X_1^1\wedge X^1_2\ox\Delta^{[n-d]}_{[n-d]}\right)=\textstyle\sum\limits_{i,j=1}^{n-d}(-1)^{j+i}X_1^1\wedge X^1_2\wedge X^i_j\ox\Delta^{[n-d]\smallsetminus\{i\}}_{[n-d]\smallsetminus \{j\}}.
 \]
 We may see this is not zero since the term $X^1_1\wedge X^1_2\wedge X^1_3\ox\Delta^{[n-d]\smallsetminus\{1\}}_{[n-d]\smallsetminus\{3\}}$ appears in the sum only once.
\end{proof}

\begin{lemma}
 $\yfdet$ is an isomorphism on the irreducible module $S_{2,1^{n-d}}A\otimes S_{2,1^{n-d}}B$.
\end{lemma}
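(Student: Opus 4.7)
The plan is to follow the proof strategy of the preceding lemmas of Section 4: exhibit a highest weight vector of $S_{2,1^{n-d}}A\otimes S_{2,1^{n-d}}B$ in the domain, apply $\yfdet$ to it, and display a single basis element of the codomain whose coefficient in the image is nonzero. Since $S_{2,1^{n-d}}A\otimes S_{2,1^{n-d}}B$ occurs in the codomain with multiplicity one (it arises only from the middle Cauchy piece $S_{2,1}A\otimes S_{2,1}B$ of $\bigwedge^3(A\otimes B)$, tensored with the $S_{2,1^{n-d}}$-summands of $\bigwedge^{n-d-1}A\otimes S_{2,1}A$ and $\bigwedge^{n-d-1}B\otimes S_{2,1}B$), Schur's Lemma then suffices to conclude.

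The novel feature is that $S_{2,1^{n-d}}A\otimes S_{2,1^{n-d}}B$ occurs in the \emph{domain} with multiplicity two, one copy inside $\bigwedge^{n-d}A\otimes S_2 A\otimes\bigwedge^{n-d}B\otimes\bigwedge^2 B$ and one inside $\bigwedge^{n-d}A\otimes\bigwedge^2 A\otimes\bigwedge^{n-d}B\otimes S_2 B$.  A quick check (already visible at $n-d=1$) shows that the most natural highest weight vector, obtained by projecting $(a_1\wdots a_{n-d+1})\ox a_1\ox(b_1\wdots b_{n-d+1})\ox b_1$ out of $\bigwedge^{n-d+1}A\ox A\ox\bigwedge^{n-d+1}B\ox B$, lies in the kernel of $\yfdet$: the contributions from the two multiplicity-two pieces cancel against one another.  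One must therefore choose a highest weight vector supported entirely in a single one of these pieces.  Using the Pieri highest weight vectors of $S_{2,1^{n-d}}A$ inside $\bigwedge^{n-d}A\ox S_2 A$ and of $S_{2,1^{n-d}}B$ inside $\bigwedge^{n-d}B\ox\bigwedge^2 B$, together with the inclusion $S_2 A\ox\bigwedge^2 B\hookrightarrow\bigwedge^2(A\ox B)$ sending $(a_1\cdot a_k)\ox(b_1\wedge b_l)$ to $X^1_1\wedge X^k_l-X^1_l\wedge X^k_1$, one obtains the explicit candidate
\[
W\;=\;\sum_{k=1}^{n-d+1}\sum_{l=2}^{n-d+1}(-1)^{k+l}\,\Delta^{[n-d+1]\smallsetminus\{k\}}_{[n-d+1]\smallsetminus\{l\}}\ox\bigl(X^1_1\wedge X^k_l-X^1_l\wedge X^k_1\bigr).
\]

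The final step is to apply $\yfdet$ to $W$ and read off the coefficient of the codomain basis element
\[
X^1_1\wedge X^1_2\wedge X^2_1\ox\Delta^{[n-d+1]\smallsetminus\{1,2\}}_{[n-d+1]\smallsetminus\{1,2\}}.
\]
Only quadruples $(k,l,i,j)$ with $\{k,i\}=\{l,j\}=\{1,2\}$ can contribute, and combined with the constraint $l\geq 2$ in the sum defining $W$ and the requirement that the resulting three-fold wedge actually hit $X^1_1\wedge X^1_2\wedge X^2_1$, this leaves exactly three contributing quadruples---one from the $X^1_1\wedge X^k_l$ half of $W$ and two from the $X^1_l\wedge X^k_1$ half.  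Tracking the Pieri sign $(-1)^{k+l}$, the cofactor sign from the positions of $i$ in $I$ and $j$ in $J$, and the sign from reordering the three-fold wedge into the fixed lexicographic order, each of the three contributions evaluates to $-1$, summing to $-3$, which is nonzero.

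The principal obstacle is the multiplicity-two phenomenon: one must recognize that the ``obvious'' highest weight vector lifted from the larger ambient space $\bigwedge^{n-d+1}A\ox A\ox\bigwedge^{n-d+1}B\ox B$ is killed by $\yfdet$, and replace it by a vector drawn from a single multiplicity-one subspace.  Once the correct $W$ is fixed, the remaining sign bookkeeping is more intricate than in the preceding lemmas---where a single quadruple contributed to the chosen target---but it is a direct calculation.
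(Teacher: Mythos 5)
Your proof is correct and is essentially the paper's own argument: the vector $W$ you construct is, term for term, identical to the highest weight vector the paper displays (the paper writes $+X^i_1\wedge X^1_j$ where you write $-X^1_l\wedge X^k_1$), and both arguments conclude by checking that the coefficient of $X^1_1\wedge X^1_2\wedge X^2_1\ox\Delta^{[n-d+1]\smallsetminus\{1,2\}}_{[n-d+1]\smallsetminus\{1,2\}}$ in the image is nonzero. Your discussion of the multiplicity-two occurrence of $S_{2,1^{n-d}}A\ox S_{2,1^{n-d}}B$ in the domain, and of the resulting one-dimensional kernel among highest weight vectors of that weight, is a correct clarification of a subtlety the paper leaves implicit.
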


\begin{proof}
Consider $a_1\wdots a_{n-d+1}\ox a_1\ox b_1\wdots b_{n-d+1}\ox b_1$, a highest weight vector of the irreducible module $S_{2,1^{n-d}}A\otimes S_{2,1^{n-d}}B$. Its projection to $\bigwedge^{2}(A\ox B)\ox\bigwedge^{n-d}A\ox\bigwedge^{n-d}B$ is a multiple of
\[
\textstyle\sum\limits_{i=1}^{n-d+1}\textstyle\sum\limits_{j=2}^{n-d+1}(-1)^{i+j}X^1_1\wedge X^i_j\ox\Delta^{[n-d+1]\smallsetminus\{i\}}_{[n-d+1]\smallsetminus\{j\}}+\textstyle\sum\limits_{i=1}^{n-d+1}\textstyle\sum\limits_{j=2}^{n-d+1}(-1)^{i+j}X^i_1\wedge X^1_j\ox\Delta^{[n-d+1]\smallsetminus\{i\}}_{[n-d+1]\smallsetminus\{j\}}.
\]
Then
\begin{align*}
 \yfdet\left(\textstyle\sum\limits_{i=1}^{n-d+1}\textstyle\sum\limits_{j=2}^{n-d+1}\right.&\left.(-1)^{i+j}X^1_1\wedge X^i_j\ox\Delta^{[n-d+1]\smallsetminus\{i\}}_{[n-d+1]\smallsetminus\{j\}}\right.\\
 &\left.+\textstyle\sum\limits_{i=1}^{n-d+1}\textstyle\sum\limits_{j=2}^{n-d+1}(-1)^{i+j}X^i_1\wedge X^1_j\ox\Delta^{[n-d+1]\smallsetminus\{i\}}_{[n-d+1]\smallsetminus\{j\}}\right)\\
 =\textstyle\sum\limits_{i=1}^{n-d+1}\textstyle\sum\limits_{j=2}^{n-d+1}\textstyle\sum\limits_{\substack{k\in[n-d+1]\smallsetminus\{i\}\\l\in[n-d+1]\smallsetminus\{j\}}}&(-1)^{i+j}(-1)^{\tilde{k}+\tilde{l}}X^1_1\wedge X^i_j\wedge X^k_l\ox\Delta^{[n-d+1]\smallsetminus\{i,k\}}_{[n-d+1]\smallsetminus\{j,l\}}\\
+\textstyle\sum\limits_{i=1}^{n-d+1}\textstyle\sum\limits_{j=2}^{n-d+1}\textstyle\sum\limits_{\substack{k\in[n-d+1]\smallsetminus\{i\}\\l\in[n-d+1]\smallsetminus\{j\}}}&(-1)^{i+j}(-1)^{\tilde{k}+\tilde{l}}X^i_1\wedge X^1_j\wedge X^k_l\ox\Delta^{[n-d+1]\smallsetminus\{i,k\}}_{[n-d+1]\smallsetminus\{j,l\}}
\end{align*}
where
\[
\tilde{k}:=
\begin{cases}k,&k<i\\k-1,&i<k
\end{cases}\]
and \[
\tilde{l}:=
\begin{cases}l,&l<j\\l-1,&j<l
.\end{cases}
\]
Since $X^1_1\wedge X^1_2\wedge X^2_1\ox\Delta^{[n-d+1]\smallsetminus\{1,2\}}_{[n-d+1]\smallsetminus\{2,1\}}$ does not cancel the lemma is proven.
\end{proof}

\begin{lemma}
 $\yfdet$ is an isomorphism on the irreducible module $S_{2,2,1^{n-d-2}}A\otimes S_{2,1^{n-d}}B$ and by symmetry on $S_{2,1^{n-d}}A\otimes S_{2,2,1^{n-d-2}}B$.
\end{lemma}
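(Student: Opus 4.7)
The plan is to follow the template of the preceding lemmas: exhibit a highest weight vector of $S_{2,2,1^{n-d-2}}A\ox S_{2,1^{n-d}}B$ in the domain of $\yfdet$, apply the map, and identify a single output monomial whose coefficient is nonzero. Since Lemma \ref{lemmaDecomp} establishes that this module occurs with multiplicity one in both source and target, a nonzero image on any highest weight vector forces an isomorphism onto the corresponding target summand by Schur's Lemma.

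The new subtlety compared with the earlier proofs is that the highest weight vector cannot be realized as a single simple tensor in the domain: the $A$-shape $(2,2,1^{n-d-2})$ forces the $\bigwedge^2(A\ox B)$ factor to lie in the $\bigwedge^2 A\ox S_2 B$ summand, and the resulting highest weight vector of $S_{2,1^{n-d}}B$ inside $\bigwedge^{n-d}B\ox S_2 B$ is genuinely a sum. Instead I realize the highest weight vector in the auxiliary space $\bigwedge^{n-d}A\ox\bigwedge^2 A\ox\bigwedge^{n-d+1}B\ox B$ as
\[
a_1\wdots a_{n-d}\ox(a_1\wedge a_2)\ox b_1\wdots b_{n-d+1}\ox b_1,
\]
which by a direct weight check (using dual Pieri on the $A$-side and Pieri on the $B$-side) is a highest weight vector of type $S_{2,2,1^{n-d-2}}A\ox S_{2,1^{n-d}}B$. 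Projecting to $\bigwedge^{n-d}A\ox\bigwedge^{n-d}B\ox\bigwedge^2(A\ox B)$ via polarization $\bigwedge^{n-d+1}B\to\bigwedge^{n-d}B\ox B$ and the canonical inclusion $\bigwedge^2 A\ox S_2 B\hookrightarrow\bigwedge^2(A\ox B)$ yields, up to a nonzero scalar,
\[
\textstyle\sum\limits_{k=1}^{n-d+1}(-1)^{k-1}\Delta^{[n-d]}_{[n-d+1]\smallsetminus\{k\}}\ox\bigl(X^1_k\wedge X^2_1+X^1_1\wedge X^2_k\bigr).
\]

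Applying $\yfdet$, I track the coefficient of the output trivector $X^1_1\wedge X^1_2\wedge X^2_1\ox\Delta^{[n-d]\smallsetminus\{1\}}_{[n-d+1]\smallsetminus\{1,2\}}$. Obtaining the column set $\{3,\ldots,n-d+1\}$ by removing a single index $j$ from $[n-d+1]\smallsetminus\{k\}$ forces $\{1,2\}\subset[n-d+1]\smallsetminus\{k\}$, hence $k\in\{1,2\}$; furthermore only the subterms of the form $X^1_k\wedge X^2_1$ can supply the $X^2_1$ factor needed in the trivector, since the $X^1_1\wedge X^2_k$ subterms for $k\geq 2$ contain $X^2_k\neq X^2_1$. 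This leaves exactly two contributing configurations, $(k,i,j)=(1,1,2)$ and $(k,i,j)=(2,1,1)$. A position-adjusted sign computation, analogous to the one in the $S_{2,1^{n-d}}A\ox S_{2,1^{n-d}}B$ lemma, shows both carry the same sign and hence reinforce rather than cancel. The main obstacle is precisely this potential cancellation, inherent to the fact that the highest weight vector is a genuine sum over $k$; the explicit enumeration of contributing tuples and their signs is what rules it out.
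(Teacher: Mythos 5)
Your proposal is correct and follows essentially the same route as the paper: the same highest weight vector $a_1\wdots a_{n-d}\ox a_1\wedge a_2\ox b_1\wdots b_{n-d+1}\ox b_1$, the same projected sum (up to an overall sign and ordering of tensor factors), and the same target trivector $X^1_1\wedge X^1_2\wedge X^2_1\ox\Delta^{[n-d]\smallsetminus\{1\}}_{[n-d+1]\smallsetminus\{1,2\}}$ whose coefficient is checked to be nonzero. One small bookkeeping point: at $k=1$ the two subterms coincide ($X^1_1\wedge X^2_1$ appears from both $X^1_k\wedge X^2_1$ and $X^1_1\wedge X^2_k$), so there are really three contributing terms rather than two, but since all carry the same sign the total coefficient is a nonzero multiple and your conclusion is unaffected.
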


\begin{proof}
 Consider $a_1\wdots a_{n-d}\ox a_1\wedge a_2\ox b_1\wdots b_{n-d+1}\ox b_1$, a highest weight vector of the irreducible module $S_{2,2,1^{n-d-2}}A\otimes S_{2,1^{n-d}}B$. Its projection to $\bigwedge^{2}(A\ox B)\ox\bigwedge^{n-d}A\ox\bigwedge^{n-d}B$ is a multiple of
 \[
\textstyle\sum\limits_{i=1}^{n-d+1}(-1)^{i}X^1_1\wedge 
X^2_i\ox\Delta^{[n-d]}_{[n-d+1]\smallsetminus\{i\}}+\textstyle\sum\limits_{i=1}^
{n-d+1}(-1)^{i}X^1_i\wedge X^2_1\ox\Delta^{[n-d]}_{[n-d+1]\smallsetminus\{i\}}
\]
Then
\begin{align*}
 &\yfdet\left(\textstyle\sum\limits_{i=1}^{n-d+1}(-1)^{i}X^1_1\wedge X^2_i\ox\Delta^{[n-d]}_{[n-d+1]\smallsetminus\{i\}}+\textstyle\sum\limits_{i=1}^{n-d+1}(-1)^{i}X^1_i\wedge X^2_1\ox\Delta^{[n-d]}_{[n-d+1]\smallsetminus\{i\}}\right)\\
 &=\textstyle\sum\limits_{k=1}^{n-d}\textstyle\sum\limits_{i=1}^{n-d+1}\textstyle\sum\limits_{j\in[n-d+1]\smallsetminus\{i\}}(-1)^{i}(-1)^{\tilde{j}+k}X^1_1\wedge X^2_i\wedge X^k_j\ox\Delta^{[n-d]\smallsetminus\{k\}}_{[n-d+1]\smallsetminus\{i,j\}}\\
 &+\textstyle\sum\limits_{k=1}^{n-d}\textstyle\sum\limits_{i=1}^{n-d+1}\textstyle\sum\limits_{j\in[n-d+1]\smallsetminus\{i\}}(-1)^{i}(-1)^{\tilde{j}+k}X^1_i\wedge X^2_1\wedge X^k_j\ox\Delta^{[n-d]\smallsetminus\{k\}}_{[n-d+1]\smallsetminus\{i,j\}}
\end{align*}
where \[
\tilde{j}:=
\begin{cases}j,&j<i\\j-1,&i<j
.\end{cases}\]
Observing that $X^1_1\wedge X^2_1\wedge X^1_2\ox\Delta^{[n-d]\smallsetminus\{1\}}_{[n-d+1]\smallsetminus\{1,2\}}$ does not cancel proves the lemma.
\end{proof}

\begin{lemma}\label{lemmaImage}
 The image of $\yfdet$ consists of all of the irreducible modules in the decomposition in Lemma \ref{lemmaDecomp}.
\end{lemma}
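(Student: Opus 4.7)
The plan is to combine Lemma \ref{lemmaDecomp} with the six preceding lemmas via Schur's lemma. Let $M$ denote the direct sum of the nine irreducible $GL_n \times GL_n$-modules listed in Lemma \ref{lemmaDecomp}; that lemma already supplies the containment $\im(\yfdet) \subseteq M$, so only the reverse containment needs to be established. The map $\yfdet$ is manifestly $GL_n \times GL_n$-equivariant, being built from exterior products, tensor products, and index contractions, so by Schur's lemma its restriction to any irreducible summand of the source is either zero or an isomorphism onto an isomorphic copy inside the target. Since each of the nine summands appears with multiplicity one both in the source decomposition and inside $M$, any nonzero restriction lands precisely on the matching summand of $M$.

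It therefore suffices to verify, for each of the nine irreducible summands in $M$, that $\yfdet$ does not vanish on it, and for this it is enough to exhibit a single highest weight vector on which $\yfdet$ is nonzero. This is exactly what the six preceding lemmas accomplish, one summand (or $A \leftrightarrow B$-symmetric pair of summands) at a time: each chooses an explicit highest weight vector, computes its projection into $\bigwedge^{2}(A\ox B) \ox \bigwedge^{n-d}A \ox \bigwedge^{n-d}B$, applies $\yfdet$, and pinpoints a distinguished term of the form $X^{\bullet}_{\bullet} \wedge X^{\bullet}_{\bullet} \wedge X^{\bullet}_{\bullet} \ox \Delta^{\bullet}_{\bullet}$ in the expansion that cannot cancel against any other contribution. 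Collecting them pair by pair (the three pairs involving $S_{3,1^{n-d-1}}$, the pair involving $S_{2,2,1^{n-d-2}} \ox S_{2,1^{n-d}}$ and its transpose, and the self-symmetric $S_{2,1^{n-d}}A \ox S_{2,1^{n-d}}B$) accounts for all $2 + 2 + 2 + 2 + 1 = 9$ summands of $M$.

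The only real content beyond citing these lemmas is the bookkeeping confirmation that the nine summands listed in Lemma \ref{lemmaDecomp} match exactly those whose nonvanishing has been checked, and I expect no serious obstacle here, as the pairing is immediate by inspection of the highest weights used. Combining the nonvanishing statements with the containment $\im(\yfdet) \subseteq M$ then yields $\im(\yfdet) = M$, as claimed.
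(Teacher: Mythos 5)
Your proposal is correct and is essentially the paper's own argument, which disposes of this lemma with the single line ``This is demonstrated by the preceding lemmas''; you have merely made explicit the Schur's-lemma, equivariance, and multiplicity-one reasoning that the paper leaves implicit. (The one wrinkle in your ``bookkeeping'' step is that the last three summands in Lemma \ref{lemmaDecomp} are mislabeled in the paper — by counting boxes they should read $S_{2,1^{n-d}}$ and $S_{2,2,1^{n-d-2}}$ so as to match the highest weight vectors used in the nonvanishing lemmas — but that is a typo in the paper, not a gap in your argument.)
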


\begin{proof}
 This is demonstrated by the preceding lemmas.
\end{proof}

\begin{lemma}\label{lemmaImDim}
 $\dim(\im(\yfdet))$ has a maximum at $d=\lfloor\frac{n}{2}\rfloor$.
\end{lemma}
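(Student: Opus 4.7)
The plan is to compute $\dim\im(\yfdet)$ in closed form as a function of $d$ and then locate its maximum directly. By Lemma~\ref{lemmaImage}, this dimension is the sum of the dimensions of the nine irreducible $GL_n\times GL_n$-modules listed in Lemma~\ref{lemmaDecomp}. Using the hook content formula, each dimension $\dim S_\pi \mathbb{C}^n$ for the six partition shapes that appear — namely $(3, 1^{m-1})$, $(1^{m+2})$, $(2, 1^m)$, $(2, 2, 1^{m-2})$, $(2, 1^{m+1})$, and $(2, 2, 1^{m-1})$ with $m := n-d$ — can be written as a rational function of $n$ and $m$ times a binomial coefficient $\binom{n-1}{m-1+k}$ with $k \in \{0, 1, 2\}$ determined by the number of rows of $\pi$. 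Multiplying the $A$- and $B$-factors for each module, summing the nine products, and using the $A\leftrightarrow B$ symmetry to collapse matching pairs, yields an explicit formula
\[
f(n,d) := \dim\im(\yfdet) = \sum_{(k,k')} C_{k,k'}(n,m)\binom{n-1}{m-1+k}\binom{n-1}{m-1+k'},
\]
where only five unordered pairs $(k,k')$ arise.

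Next, I would analyze the discrete difference $f(n,d) - f(n,d+1)$. Using the identity $\binom{n-1}{m-2+k}/\binom{n-1}{m-1+k} = (m-1+k)/(n-m+1-k)$, each term's ratio under $m \mapsto m-1$ becomes a rational function of $m$ and $n$; after clearing a common positive factor, the sign of $f(n,d) - f(n,d+1)$ is controlled by a single polynomial inequality in $m$ and $n$. I expect this inequality to flip sign precisely at $m = \lceil n/2 \rceil$, equivalently $d = \lfloor n/2 \rfloor$, identifying this value as the location of the maximum.

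The main obstacle is algebraic bookkeeping: the summands have distinct rational prefactors in $n$ and $m$, so the resulting difference does not obviously simplify to a single-signed expression. To keep things manageable, I would group the summands by their unordered pair $(k,k')$ of binomial indices, handle each group separately, and combine. A direct check at the neighbouring values $d = \lfloor n/2 \rfloor \pm 1$ would then confirm that the located critical point is indeed the global maximum over the permissible range of $d$.
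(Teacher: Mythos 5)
Your proposal follows essentially the same route as the paper: the paper likewise writes $\dim(\im(\yfdet))$ in closed form as $f(n,d)\binom{n}{d}^2$ (with $f$ an explicit five-term rational function obtained by summing the dimensions of the nine modules, paired up by the $A\leftrightarrow B$ symmetry) and then examines the consecutive difference, rewritten as $\bigl(f(n,d)-f(n,d+1)\tfrac{(n-d)^2}{(d+1)^2}\bigr)\binom{n}{d}^2$, checking its sign at $d=\lfloor n/2\rfloor-1$ and $d=\lfloor n/2\rfloor$. Note that the paper, like your final step, only verifies the two neighbouring values (a local maximum); establishing the global maximum would require the single sign change you anticipate, which neither argument spells out.
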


\begin{proof}
Begin by factoring $\dim(\im(\yfdet))$ into the form $f(n,d)\binom{n}{d}^2$, where $f(n,d)$ is a rational function of $n$ and $d$. In particular

\begin{align*}
f(n,d)&:=\frac{(n+2)(n+1)(n-d)(d)(d-1)}{(n-d+2)^2(n-d+1)}+\frac{
(n+2)(n+1)^2(n-d)(d)}{(n-d+2)^2}\\
&+\frac{(n+2)(n+1)^2(n-d)(n)(n-d-1)}{2(n-d+2)(n-d+1)}+\frac{(n+1)^2(n)(n-d-1)(d)
}{(n-d+1)(n-d+2)}\\
&+\frac{(n+1)^2(d)^2}{(n-d+2)^2}
\end{align*}

Then consider
\[
f(n,d)\binom{n}{d}^2-f(n,d+1)\binom{n}{d+1}^2
\]
and rewrite it as
\[
 \left( f(n,d)-f(n,d+1)\frac{(n-d)^2}{(d+1)^2}\right)\binom{n}{d}^2.
\]
Notice that $f(n,d)-f(n,d+1)\frac{(n-d)^2}{(d+1)^2}<0$ for $d=\lfloor\frac{n}{2}\rfloor-1$ and $f(n,d)-f(n,d+1)\frac{(n-d)^2}{(d+1)^2}>0$ for $d=\lfloor\frac{n}{2}\rfloor$ and conclude the lemma.
\end{proof}

\begin{remark}
 The requirement for $n\geq 5$ in the main theorem, is so that the length of all partitions $S_{1^{n-d+2}}A$, $S_{2,1^{n-d}}A$, $S_{3,1^{n-d-1}}A$, and $S_{2,2,1^{n-d-2}}A$ (respectively $B$) do not exceed $\dim(A)=\dim(B)=n$.  Hence, all of the irreducible modules in the decomposition in Lemma \ref{lemmaDecomp} occur when $d=\lfloor\frac{n}{2}\rfloor$.
\end{remark}

\begin{remark}\label{remVerRank}
$\rank([{(X^i_j)}^n]^{\wedge 2}_{d,n-d})=\binom{n^2-1}{2}$.  Which may be verified easily by noticing the image of contracting $\alpha\in S^d(A\otimes B)*$ with $(X^i_j)^n$ is in the span of $(X^i_j)^{n-d}$ and $\widehat{(X^i_j)^{n-d}}$ is in the span of $(X^i_j)^{n-d-1}\otimes X^i_j$.  Hence $\im([{(X^i_j)}^n]^{\wedge 2}_{d,n-d}))$ are of the form $(X^i_j)^{n-d-1}\otimes X^i_j\wedge v\wedge w$, where $v$ and $w$ cannot be in the span of $X^i_j$.
\end{remark}

The main theorem follows by substituting $\lfloor\frac{n}{2}\rfloor$ into $f(n,d)$ from the proof of Lemma \ref{lemmaImDim}, dividing by $\binom{n^2-1}{2}$ which is the rank from Remark \ref{remVerRank}, and simplifying. 

\section{$3\times 3$ determinant and permanent}

Define the partitions $\pi_n=((n-1)^{n+1},(n-2)^{n+1},\ldots,1^{n+1})$ and 
$\tilde{\pi}_n=(n,\pi_n)$.  For example, $\pi_3=(2^4,1^4)$ and let 
$\tilde{\pi}_3=(3,2^4,1^4)$.  Note that 
$\dim(S_{\pi_3}\mathbb{C}^{9})=\dim(S_{\tilde{\pi}_3}\mathbb{C}^{9})=1050$. For 
a polynomial $\phi\in S^3\mathbb{C}^9$, define the Young flattening 
	\[
	\mathcal{F}_{\pi_3,\tilde{\pi}_3}(\phi): S_{\pi_3}\mathbb{C}^9\to 
S_{\tilde{\pi}_3}\mathbb{C}^9
	\]
by the labeled Pieri product restricted to shape $\tilde{\pi}_3$

	\[
	T_{\pi_3}\otimes\phi=\sum 
c_{T_{\pi_3},\tilde{T}_{\tilde{\pi}_3}}\tilde{T}_{\tilde{\pi}_3}
	\]
where $T_{\pi_3}$ and $\tilde{T}_{\tilde{\pi}_3}$ are semi-standard fillings of 
tableaux of shape $\pi_3$ and $\tilde{\pi}_3$ respectively and where 
$c_{T_{\pi_3},\tilde{T}_{\tilde{\pi}_3}}$ is obtained by adding boxes to 
$\pi_3$ such as to obtain a tableau of shape $\tilde{\pi}_3$ and for each 
monomial in $\phi$, label the boxes with the variable names in all 
permutations and straighten.  $c_{T_{\pi_3},\tilde{T}_{\tilde{\pi}_3}}$ is the 
coefficient of $\tilde{T}_{\tilde{\pi}_3}$.

Consider the polynomial $(x_{3,3})^3\in S^3\mathbb{C}^9$, we immediately 
see that if $T_{\pi_3}$ has any box labeled $x_{3,3}$, then 
$\mathcal{F}_{\pi_3,\tilde{\pi}_3}((x_{3,3})^3)=0$.  Since this is the only 
restriction of tableaux,

	\[
	\dim \mathrm{Im}(\mathcal{F}_{\pi_3,\tilde{\pi}_3}((x_{3,3})^3))=\dim 
S_{\pi_3}\mathbb{C}^8=70.
	\]
By proposition \ref{LOprop}, if 
$[x^3]\in v_3(\mathbb{P}\mathbb{C}^9)$ has $\rank\mathcal{F}_{\mu,\nu}(x^3)=p$, 
then for $[\phi]\in \mathbb{P}S^3\mathbb{C}^9$ with rank $r$, 
$\rank(\mathcal{F}_{\mu,\nu}(\phi))\leq rp$.
Thus the maximum lower bound on symmetric border rank on 
polynomial $\phi\in S^3\mathbb{C}^9$ this method may achieve is

	\[
	\underline{R}_s(\phi)\geq 15
	\]
This being when $\dim 
\mathrm{Im}(\mathcal{F}_{\pi_3,\tilde{\pi}_3}(\phi))=1050$.  Applying this 
flattening to $\det_3$ and $\perm_3$ and using the the Macaulay2 \cite{M2} package \verb PieriMaps \ developed by Steven Sam \cite{Sam} we get
	\[
	\dim \mathrm{Im}(\mathcal{F}_{\pi_3,\tilde{\pi}_3}(\textstyle\det _ 
3))=950
	\]
and
	\[
	\dim \mathrm{Im}(\mathcal{F}_{\pi_3,\tilde{\pi}_3}(\textstyle\perm _ 
3))=934.
	\]
These give the following lower bounds
	\[
	\underline{R}_s(\textstyle\det_3)\geq 14
	\]
and
	\[
	\underline{R}_s(\textstyle\perm_3)\geq 14.
	\]
This is an improvement from the classical lower bound for the determinant of 9 
and the bound obtained from the Koszul-Young flattening $\det_{1,2}^{\wedge 2}$ 
of 12.

The following code is used to complete the above computations.

\begin{verbatim}
loadPackage"PieriMaps"
A=QQ[x_(0,0)..x_(2,2)]
time MX = pieri({3,2,2,2,2,1,1,1,1},{1,5,9},A);
rank diff(x_(0,0)^3,MX)
f = det genericMatrix(A,x_(0,0), 3,3)
rank diff(f,MX)
g =x_(0,2)*x_(1,1)*x_(2,0)+x_(0,1)*x_(1,2)*x_(2,0)+
   x_(0,2)*x_(1,0)*x_(2,1)+x_(0,0)*x_(1,2)*x_(2,1)+
   x_(0,1)*x_(1,0)*x_(2,2)+x_(0,0)*x_(1,1)*x_(2,2)
rank diff(g,MX)
\end{verbatim}

\section*{Acknowledgement}

Part of this work was done while the author was visiting the Simons Institute for the Theory of Computing, UC Berkeley for the Algorithms and Complexity in Algebraic Geometry program.  The author would also like to thank Luke Oeding for useful conversation and help with Macaulay2 computations, Christian Ikenmeyer for useful conversation, and J.M. Landsberg for his guidance.

\bibliography{mybib}

\begin{thebibliography}{10}

\bibitem{BuczBuczTeit}
Weronika Buczy{\'n}ska, Jaros{\l}aw Buczy{\'n}ski, and Zach Teitler.
\newblock Waring decompositions of monomials.
\newblock {\em Journal of Algebra}, 378(0):45 -- 57, 2013.

\bibitem{CarlCatGera}
Enrico Carlini, Maria~Virginia Catalisano, and Anthony~V. Geramita.
\newblock The solution to the waring problem for monomials and the sum of
  coprime monomials.
\newblock {\em Journal of Algebra}, 370(0):5 -- 14, 2012.

\bibitem{DerkNNorm}
Harm {Derksen}.
\newblock {On the Nuclear Norm and the Singular Value Decomposition of
  Tensors}.
\newblock {\em ArXiv e-prints}, August 2013.

\bibitem{DerkTeit}
Harm Derksen and Zach Teitler.
\newblock Lower bound for ranks of invariant forms.
\newblock {\em J. Pure Appl. Algebra}, 219(12):5429--5441, 2015.

\bibitem{FultYT}
William Fulton.
\newblock {\em Young tableaux}, volume~35 of {\em London Mathematical Society
  Student Texts}.
\newblock Cambridge University Press, Cambridge, 1997.
\newblock With applications to representation theory and geometry.

\bibitem{FultHarRep}
William Fulton and Joe Harris.
\newblock {\em Representation theory}, volume 129 of {\em Graduate Texts in
  Mathematics}.
\newblock Springer-Verlag, New York, 1991.
\newblock A first course, Readings in Mathematics.

\bibitem{GlynnPermSquare}
David~G. Glynn.
\newblock The permanent of a square matrix.
\newblock {\em European J. Combin.}, 31(7):1887--1891, 2010.

\bibitem{M2}
Daniel~R. Grayson and Michael~E. Stillman.
\newblock Macaulay2, a software system for research in algebraic geometry.
\newblock Available at \url{http://www.math.uiuc.edu/Macaulay2/}.

\bibitem{TeitlerIlten}
Nathan {Ilten} and Zach {Teitler}.
\newblock {Product Ranks of the $3 \times 3$ Determinant and Permanent}.
\newblock {\em ArXiv e-prints}, March 2015.

\bibitem{LandTensor}
J.~M. Landsberg.
\newblock {\em Tensors: geometry and applications}, volume 128 of {\em Graduate
  Studies in Mathematics}.
\newblock American Mathematical Society, Providence, RI, 2012.

\bibitem{LandTeit}
J.~M. Landsberg and Zach Teitler.
\newblock On the ranks and border ranks of symmetric tensors.
\newblock {\em Found. Comput. Math.}, 10(3):339--366, 2010.

\bibitem{LandOtt}
J.M. Landsberg and Giorgio Ottaviani.
\newblock Equations for secant varieties of veronese and other varieties.
\newblock {\em Annali di Matematica Pura ed Applicata}, 192(4):569--606, 2013.

\bibitem{RyserFormula}
Herbert~John Ryser.
\newblock {\em Combinatorial mathematics}.
\newblock The Carus Mathematical Monographs, No. 14. Published by The
  Mathematical Association of America; distributed by John Wiley and Sons,
  Inc., New York, 1963.

\bibitem{Sam}
Steven~V. Sam.
\newblock Computing inclusions of {S}chur modules.
\newblock {\em J. Softw. Algebra Geom.}, 1:5--10, 2009.

\bibitem{Shafiei}
Sepideh~Masoumeh Shafiei.
\newblock Apolarity for determinants and permanents of generic matrices.
\newblock {\em J. Commut. Algebra}, 7(1):89--123, 2015.

\end{thebibliography}
\bibliographystyle{plain}

\end{document}